\renewcommand*\env@matrix[1][*\c@MaxMatrixCols c]{%
  \hskip -\arraycolsep
  \let\@ifnextchar\new@ifnextchar
  \array{#1}}
\newtheorem{lemma}{Lemma}
\newtheorem{algorithm}{Algorithm}
\newtheorem*{remark}{Remark}
\title{Cost of institutional incentives for promoting cooperation in 2×2 games and collective risk games}
\author[$\dagger$]{M. H. Duong}
\author[$\star$]{C. M. Durbac}
\author[$\ddag$]{T. A. Han}
\affil[$\dagger$]{School of Mathematics, University of Birmingham, UK. Email: h.duong@bham.ac.uk}
\affil[$\star$]{School of Mathematics, University of Birmingham, UK. Email: c.m.durbac@bham.ac.uk}
\affil[$\ddag$]{ School of Computing, Engineering and Digital Technologies, Teesside University, UK. Email: T.Han@tees.ac.uk}
\date\today
\begin{document}

\maketitle
\begin{abstract}
Prosocial behaviours have been extensively studied across multiple disciplines. Cooperation, requiring a personal cost for collective benefits, is widespread in nature and human society, having been explained through mechanisms such as kin selection, direct and indirect reciprocity, and network reciprocity. Institutional incentives, which reward cooperation and punish anti-social behaviour, offer a promising approach to fostering cooperation in groups of self-interested individuals. Focusing on general $2\times2$ games and the collective risk game (which is a fundamental model for climate action), we analyse the associated cost of providing incentives under evolutionary dynamics governed by Fermi's rule, exploring the asymptotic behaviour of the incentive cost functons in the limits of neutral drift and strong selection. We also implement numerical simulations to study how parameters such as the intensity of selection affect the behaviour of the aforementioned cost functions.
\end{abstract}
\newpage

\tableofcontents

\section{Introduction}

The understanding of what motivates prosocial behaviours in self-regarding groups has been the topic of debate in many disciplines such as Mathematics, Computer Science, Sociology, and Anthropology. Seemingly against Darwinian natural selection which emphasises payoff maximisation, cooperation involves incurring personal costs for collective benefits. However, cooperation is pervasive in nature, from bacteria to insects to human societies. Scholars have studied multiple mechanisms for promoting it, among which kin selection, direct and indirect reciprocity, network reciprocity \cite{nowak2006,sigmund2010calculus,perc2017statistical,rand2013human,van2014reward,xia2023reputation,hu2020rewarding,moralpref}, and, the subject of the work at hand, institutional incentives \cite{sasaki2012take,sigmundinstitutions,wang2019exploring,duong2021cost,cimpeanu2021cost,sun2021combination,van2014reward,gurerk,gois2019reward,sun2021combination,liu2022effects,flores2024evolution,wang2024partial,hua2024coevolutionary,han2024evolutionary,liu2025evolution}. 

Institutional incentives involve the existence of an external decision-maker who imposes rewards (which increase the payoff of cooperators), punishments (which decrease the payoff of defectors), or hybrid combinations of both. Determining the optimal design of such incentives remains a central research question, particularly in finite populations modelled on Markov chains where stochastic effects such as neutral drift and the intensity of selection significantly influence evolutionary outcomes. Previous studies have examined the role of incentives in Public Goods Games and the Collective Risk Game, particularly in scenarios such as climate action. These works demonstrated that well-calibrated incentives can stabilise cooperation while reducing unnecessary costs \cite{han2017evolution,gois2019reward}. Using optimal control theory, Wang et al. \cite{wang2019exploring} obtained the specific optimal control protocols for institutional reward and institutional punishment in the case of the Public Goods Game such that the cost per capita paid by the institution is minimised.

In the case of the Prisoner’s Dilemma (PD), although mutual cooperation is Pareto superior to mutual defection, defection strictly dominates cooperation. Previous studies have examined how evolutionary mechanisms, such as reputation, reciprocity, and spatial structure, can promote cooperative strategies \cite{nowak2006,rand2013human}. More recently, institutional interventions have been introduced to stabilise cooperation in finite populations. For instance, employing optimal control theory, Rui et al. \cite{rui2018optimal} showed that carefully chosen reward and punishment schemes can sustain high levels of cooperation. 
Both aforementioned works do not take into consideration stochastic effects such as the intensity of selection which are relevant when analysing evolutionary dynamics. In the paper at hand, the intensity of selection is a central parameter of our model, drastically affecting the behaviour of the incentive cost functions.

While in the Prisoner's Dilemma cooperation is costly to the player, in Stag Hunt (SH) it is not. This is because there is no temptation to deviate from prosocial action if the other player cooperates, however cooperation is risky since it pays off if and only there is cooperation on the part of the opponent. From a behavioural point of view, Capraro et al. \cite{capraro2020preferences} observed that cooperation in SH is driven more by efficiency than by morality or fairness, an interesting contrast to PD where cooperation is influenced by morality. The key difference between these two games (which are similar in the sense that in both there is a group benefit) is that, for one cooperation is costly (PD), while for the other it is not (SH). This, as noted in \cite{capraro2020preferences}, suggests that the element of morality is strongly tied to how costly it is for players to cooperate as opposed to how much it benefits group dynamics. Here, we are solely interested in the game's ability to convey the cooperative behaviours of players, not on the reason for cooperating.

The Hawk and Dove (HD) game, a model of conflict over scarce resources, poses a different challenge in terms of cooperative behaviour, its equilibrium dynamics often favouring a mix of aggressive (hawk) and passive (dove) strategies. If the cost of fighting is very high, a dovish behaviour is more desirable, while, if the resource value is very high, a hawkish strategy is more favourable. For the purpose of this paper, we assume that the cost of fighting is very high and thus assume that dove is the cooperative strategy. HD has not been extensively studied in the framework of institutional incentives, however it has been the centre of focus of experimental studies because of its ability to describe the dynamics underlying international relations, in particular, negotiations \cite{clare2014hawks, mattes2019hawks}. In structured populations under a dynamical network model, it has been shown that cooperation in HD is robust when individuals are allowed to break ties with undesirable neighbours and to create new ties in their extended neighbourhood \cite{tomassini2010mutual}. In this work, we analyse the incentive cost functions when the underlying interaction is a HD game in well-mixed, finite populations.

This paper addresses the problem of optimising the cost of institutional incentives provided by an external decision-maker to maximise, or at least ensure, a desired level of cooperation within a population of self-regarding individuals. We consider two games, general $2\times2$ games (with particular attention to cooperative and defective versions of the Prisoner's Dilemma) and the Collective Risk Game, commonly used to model climate action. Evolutionary dynamics follow Fermi's rule, allowing us to capture the effect of the intensity of selection on strategy updates. Although many simulation-based studies have explored this question, rigorous analytical works remain limited \cite{wang2019exploring, han2018cost, duong2021cost, DuongDurbacHan2022, wangdecentralised, wang2023optimization}. This paper contributes to closing that gap by analysing the behaviour of reward, punishment, and hybrid incentive cost functions in both neutral drift and strong selection limits, in the case of games in which the average payoff difference between cooperators and defectors depends on the population composition. Additionally, we use numerical simulations to investigate the behaviour of cost functions across a range of parameters, assuming a general starting state, dependant on the number of cooperators and defectors computed at equilibrium.

\paragraph{Overview of contribution of this paper.}

Building on the discrete approach established in \cite{han2018cost, duong2021cost, DuongDurbacHan2022, DuongDurbacHan2024}, we investigate the problem of optimising the cost of institutional incentives - specifically reward, punishment, and hybrid schemes - to maximise cooperative behaviour (or ensure a minimum level of cooperation) in well-mixed, finite populations. We focus on a full-incentive scheme, where every player receives incentives in each generation.

The general setting, for a population of size $N$, introduces several layers of complexity due to the number of parameters involved: population size, intensity of selection, game-specific parameters, and the relative efficiency of each incentive scheme. When evolutionary dynamics are modelled as a Markov chain of order $N$, as is the case here, the fundamental matrix - central to calculating expected incentive costs - becomes increasingly intractable to evaluate both analytically and computationally. To address this, we supplement our theoretical results with numerical simulations, studying how the incentive cost functions behave under different assumptions and parameter regimes.

Whereas our previous works \cite{DuongDurbacHan2022,DuongDurbacHan2024} have focused on settings in which the payoff difference between cooperators and defectors is independent of the state of the Markov chain (counting the overall number of cooperators), such as in the Donation Game and the Public Goods Game, we now relax this assumption and allow the payoff difference to include the population composition. Although this generalisation introduces significant mathematical challenges, it allows for a more realistic modelling framework. Furthermore, in contrast to the aforementioned papers where the dynamics started equally-likely in the homogeneous states or in the all-defector state, we now assume a general starting state, dependant on the number of cooperators and defectors computed at equilibrium.

Herein, we present a detailed analysis of the cost functions associated with reward, punishment, and hybrid incentive schemes in the context of both General $2\times2$ Games and the Collective Risk Game. Our results can be summarised as follows.

\begin{enumerate}[(i)]
    \item We obtain analytically the asymptotic behaviour of the cost functions in the limit of neutral drift for any General $2\times 2 $ Game as well as for the Collective Risk Game.
    \item We obtain analytically the asymptotic behaviour of the cost functions in the limit of strong selection for a cooperative and defective Prisoner's Dilemma (types of General $2\times 2 $ Game) as well as for the Collective Risk Game.
    \item We analyse numerically the behaviour of the reward, punishment, and hybrid cost functions for the main classes of General $2\times2$ Games and the Collective Risk Game, highlighting the occurrence of a phase transition. 
    \end{enumerate} 

\subsection*{Organisation of the paper}The rest of the paper is organised as follows. Section \ref{sec: models} presents the model and the methods as well as the games we are interested in throughout this work. Section \ref{sec: asymptotic limits} details the asymptotic behaviour of the cost functions in neutral drift, in Lemma \ref{lem: neutral drift limit paper 3}, and in strong selection, in Lemma \ref{lem: strong selection limit paper 3}, while Section \ref{sec: numerical simulations} offers an overview of the behaviour of the incentive cost functions under various parameters. Section \ref{sec: discussion} presents a summary of our results as well as future research paths. Various calculations of fundamental matrices and limiting behaviours can be found in Appendix \ref{sec: appendix}.

\section{Model and methods} 
\label{sec: models}

This section presents the model and methods of our paper, starting with an introduction to the games of interest of this work. 

\subsection{Evolutionary dynamics} 
\label{subsec: ev dynamics}

\noindent We consider a well-mixed, finite population of $N$ self-regarding individuals who engage with one another using one of the following one-shot, i.e. non-repeated, games, a General $2\times2$ Game or the Collective Risk Game. Each player can choose either to cooperate ($C$) or to defect ($D$). 

We model the finite population dynamics on an absorbing Markov chain of $(N+1)$ states, $\{S_0, ..., S_N\}$, where a state $S_j$ represents a population with $j$ cooperators and $N-j$ defectors. The states $S_0$ and $S_N$ are absorbing. We employ Fermi's strategy update rule \cite{traulsen2006} stating that a player $X$ with fitness $f_X$ adopts the strategy of another player $Y$ with fitness $f_Y$ with a probability given by 

$$
P_{X,Y}=\left(1 + e^{-\beta(f_Y-f_X)}\right)^{-1},
$$ where $\beta$ represents the intensity of selection. 

We assume a small mutation rate, meaning that the population is monomorphic for the majority of the dynamics and that, in the rare cases that a mutation arises, it can either fixate or go extinct. This type of dynamics is most appropriately modelled by an absorbing Markov chain like the one described above which follows the timeline from when a mutation first appears (when the first defector/cooperator arises) to either extinction (state $S_N$/$S_0$) or fixation (state $S_0$/$S_N$). 

\subsection{Games}

\subsubsection*{General $2\times2$ Games} 

\noindent There are two possible strategies in a general $2\times2$ game, which, for the purpose of this paper, we call cooperate $C$ and defect $D$, with the payoff matrix for the row player given as follows  
\[
 \bordermatrix{~ & C & D\cr
                  C & R & S \cr
                  D & T & P  \cr
                 }. 
\]

\noindent We obtain the average payoff of a cooperator and a defector from the above payoff matrix, denoting by $\pi_{X,Y}$ the payoff of a player $X$ when interacting with another player $Y$:
\begin{equation*} 
\begin{split} 
\Pi_{C}(j) &=\frac{(j-1)\pi_{C,C} + (N-j)\pi_{C,D}}{N-1} = \frac{(j-1) R + (N-j) S}{N-1}  ,\\
\Pi_{D}(j) &=\frac{(N-j-1)\pi_{D,D} + j\pi_{D,C}}{N-1} =\frac{j T + (N-j-1)P}{N-1}.
\end{split}
\end{equation*}

\noindent Thus, 
\begin{align*}
    \delta_j &= \Pi_{C}(j) - \Pi_{D}(j) \\
    &= \frac{j(R-S-T+P)+N(S-P)-R+P}{N-1} \\
    &= \frac{(N-j)(S-P) + j(R-T) + (P-R)}{N-1}.
\end{align*}

\begin{remark}
    The payoff difference $\delta_j$ is independent of the number of cooperators $j$ when $R-S-T+P=0$, i.e. when $R+P=T+S$ corresponding to a Donation Game which was previously studied in the context of institutional incentives in \cite{duong2021cost, DuongDurbacHan2022, DuongDurbacHan2024}. 
\end{remark}
\noindent See \cite{fatima2024learning} for a recent survey on the social dilemmas. See below for a classification of General $2\times2$ Games:

\begin{table}[H]
\hspace*{3.5cm}
\begin{adjustbox}{width=0.45\textwidth}
\begin{tabular}{|c|l|l|}
\hline
\textbf{Type of Game}                    & \multicolumn{1}{c|}{\textbf{Game}}                      & \multicolumn{1}{c|}{\textbf{Ordering}}                    \\ \hline
\multirow{6}{*}{Dominance Games}         & \begin{tabular}[c]{@{}l@{}}PD (D)\\ PD (C)\end{tabular} & \begin{tabular}[c]{@{}l@{}}$T>R>P>S$\\ $S>P>R>T$\end{tabular} \\ \cline{2-3} 
                                         &                                                         & \begin{tabular}[c]{@{}l@{}}$R>T>S>P$\\ $P>S>T>R$\end{tabular} \\ \cline{2-3} 
                                         &                                                         & \begin{tabular}[c]{@{}l@{}}$R>S>T>P$\\ $P>T>S>R$\end{tabular} \\ \cline{2-3} 
                                         &                                                         & \begin{tabular}[c]{@{}l@{}}$R>S>P>T$\\ $P>T>R>S$\end{tabular} \\ \cline{2-3} 
                                         &                                                         & \begin{tabular}[c]{@{}l@{}}$S>R>P>T$\\ $T>P>R>S$\end{tabular} \\ \cline{2-3} 
                                         &                                                         & \begin{tabular}[c]{@{}l@{}}$S>R>T>P$\\ $T>P>S>R$\end{tabular} \\ \hline
\multirow{3}{*}{Anti-Coordination Games} & \begin{tabular}[c]{@{}l@{}}HD (D)\\ HD (H)\end{tabular} & \begin{tabular}[c]{@{}l@{}}$T>R>S>P$\\ $S>P>T>R$\end{tabular} \\ \cline{2-3} 
                                         &                                                         & \begin{tabular}[c]{@{}l@{}}$S>T>R>P$\\ $T>S>P>R$\end{tabular} \\ \cline{2-3} 
                                         &                                                         & \begin{tabular}[c]{@{}l@{}}$S>T>P>R$\\ $T>S>R>P$\end{tabular} \\ \hline
\multirow{3}{*}{Coordination Games}      & \begin{tabular}[c]{@{}l@{}}SH (S)\\ SH (H)\end{tabular} & \begin{tabular}[c]{@{}l@{}}$R>T>P>S$\\ $P>S>R>T$\end{tabular} \\ \cline{2-3} 
                                         &                                                         & \begin{tabular}[c]{@{}l@{}}$R>P>S>T$\\ $P>R>T>S$\end{tabular} \\ \cline{2-3} 
                                         &                                                         & \begin{tabular}[c]{@{}l@{}}$R>P>T>S$\\ $P>R>S>T$\end{tabular} \\ \hline
\end{tabular}
  \end{adjustbox}
\caption{Classification of General $2\times2$ Games  by the ordering of the entries of the payoff matrix, with known games labelled: Prisoner's Dilemma (PD), Hawk and Dove or Snowdrift (HD), and Stag Hunt (SH). There are a total of 24 orderings, corresponding to 12 independent $2\times2$ games. The games are split into three types – dominance, anti-coordination, and coordination – depending on the equilibria observed in the asymptotic limit $N\rightarrow \infty$. See \cite{pires2022more} for details.}
\label{tab: general 2x2 games classification}
\end{table}

\begin{remark}
For the Defective Prisoner's Dilemma, where $T>R>P>S$, the payoff difference $\delta_j<0$ for all $j$, while for the Cooperative Prisoner's Dilemma, where $S>P>R>T$, $\delta_j>0$ for all $j$. For all other General $2\times2$ Games, the sign of $\delta_j$ is not-trivial. In some of our analytical results, we focus our attention on the defective and cooperative Prisoner's Dilemma for tractability reasons.
\end{remark}

\subsubsection*{Collective Risk Game} 

\noindent The Collective Risk Game (CRG) is a cooperation dilemma in which a population of $N$ players engage in an $n$-person dilemma where each individual is able to contribute or not to a common good, that is, to cooperate ($C$) or to defect ($D$), respectively. Cooperation corresponds to offering a fraction $c$ of their endowment $B$, while defection means offering nothing. If a group of size $n$ does not contain $m$ cooperators (a collective effort of $mcB$), then the members will lose their remaining endowments with a probability $r$, otherwise they keep what they have. Intuitively, defecting brings a larger payoff to the individual player, but could lead to collective failure due to the inability to satisfy the number of cooperators needed in order to keep the group's endowments. CRG is widely used to model the climate change dilemma.

Letting $k$ be the numbers of cooperators in a group of size $n$ and denoting by $\pi_D$ ($\pi_C$) the payoff of a defector (cooperator) in a single round, we have:
\begin{equation*} 
\begin{split} 
\pi_D(k) &= B(\eta(k-m)+(1-r)(1-\eta(k-m)))\\
\pi_C(k) &= \pi_D(k) - cB,
\end{split}
\end{equation*} 
where $\eta(x<0)=0$ and $\eta(x\geq 0)=1$ (Heaviside step function).

In each round of the game, a group of 
$n$ players is randomly selected from the population of $N$ players through a process of sampling without replacement. The likelihood of selecting any specific combination of cooperators and defectors in a group follows a hypergeometric distribution. Within a selected group, each individual's strategy is associated with a payoff, representing their earnings for that round. Fitness is defined as the expected payoff for an individual across the entire population. In a population with $j$ cooperators out of $N$ individuals, where a selected group of size $n$ contains $k$ cooperators, the fitness of a defector ($D$) and a cooperator ($C$) can be expressed as follows:

\begin{equation*} 
\begin{split} 
\Pi_D(j) &= \frac{1}{{N-1 \choose n-1}}\sum_{k=0}^{n-1}{j \choose k}{N-j-1 \choose n-k-1}\pi_D(k)\\
\Pi_C(j) &= \frac{1}{{N-1 \choose n-1}}\sum_{k=0}^{n-1}{j-1 \choose k}{N-j \choose n-k-1}\pi_C(k + 1).
\end{split}
\end{equation*} 


\noindent Hence, 
\begin{align*}
    \Pi_C(j) - \Pi_D(j) &= \frac{1}{{N-1 \choose n-1}}\sum_{k=0}^{n-1}{j-1 \choose k}{N-j \choose n-k-1}\pi_C(k + 1) - \frac{1}{{N-1 \choose n-1}}\sum_{k=0}^{n-1}{j \choose k}{N-j-1 \choose n-k-1}\pi_D(k). \\
\end{align*}
We denote $\Pi_C(j) - \Pi_D(j)\coloneqq \delta_j$.\\

\begin{remark}
    In the case of both General $2\times 2$ Games and the Collective Risk Game, the difference in payoffs $\delta_j$ depends on $j$, the state of the Markov chain. This makes the computation of the inverse of the transition matrix intractable (needed for obtaining close-form cost functions), however in the neutral drift and strong selection limits we can obtain analytical limiting results. For the general behaviour of the cost functions, we employ numerical simulations.
\end{remark}



\subsection{Deriving the incentive cost function}

\label{sec: cost of incentive}

\subsubsection*{Institutional incentives} 

We assume the existence of an external decision-maker (called an institution) who has the budget to intervene in a population in order to steer it towards a desire outcome, i.e. to prosocial behaviours. Following the approach in \cite{duong2021cost, DuongDurbacHan2022, DuongDurbacHan2024}, we employ the following incentive scheme: to reward a cooperator (punish a defector), the institution must expend an amount of $\theta/a$  ($\theta/b$, respectively), ensuring that the cooperator's (defector's) payoff increases (decreases) by $\theta$, where $a,b>0$ are constants that define the efficiency of delivering each type of incentive. The cost per capita $\theta$ is a fixed parameter, remaining unchanged throughout the dynamics. Within this institutional enforcement framework, we assume that the institution possesses complete information regarding the population's composition or statistical attributes at the time of decision-making. Specifically, given the well-mixed population model, we assume that the number $j$ of cooperators in the population is known.

Hence, we have that, for $1\leq j\leq N-1$, the  cost per generation $j$ for the incentive providing  institution is
\begin{equation}
\label{eq: incentives per generation}
\theta_j = \begin{cases} \frac{j}{a}\theta\quad \text{reward incentive},\\
\frac{N-j}{b}\theta\quad \text{punishment incentive},\\
\min\Big(\frac{j}{a} \frac{N-j}{b}\Big)\theta\quad\text{mixed incentive}.
\end{cases}
\end{equation}

\subsubsection*{Cooperation frequency}

Recall the evolutionary dynamics modelled on an absorbing  Markov chain as presented in Subsection \ref{subsec: ev dynamics}. Using this information, we compute the expected number of times the population contains $j$ C players for $1 \leq j \leq N-1$. Let $U = \{u_{ij}\}_{i,j = 1}^{N-1}$ denote the transition matrix between the $N-1$ transient states of our Markov chain, $\{S_1, ..., S_{N-1}\}$. The transition probabilities can be defined as follows, for $1\leq i \leq N-1$: 
\begin{equation} 
\label{eq: transition probabilities}
\begin{split} 
u_{i,i\pm k} &= 0 \qquad \text{ for all } k \geq 2, \\
u_{i,i\pm1} &= \frac{N-i}{N} \frac{i}{N} \left(1 + e^{\mp\beta[\Pi_C(i) - \Pi_D(i)+\theta]}\right)^{-1},\\
u_{i,i} &= 1 - u_{i,i+1} -u_{i,i-1},
\end{split}
\end{equation} where $\Pi_C(i)$ and $\Pi_D(i)$ are the average payoffs of a cooperator (C) and defector (D), respectively, at state $i$.

Since the population consists of only two strategies, the fixation  probabilities of a C (D) player in a homogeneous population of D (C) players  when the interference scheme is carried out are, respectively, \cite{nowak}
\begin{equation*} 
\begin{split}
\rho_{D,C} &= \left(1+\sum_{i = 1}^{N-1} \prod_{k = 1}^i \frac{1+e^{\beta(\Pi_C(k)-\Pi_D(k) + \theta)}}{1+e^{-\beta(\Pi_C(k)-\Pi_D(k)+\theta)}}  \right)^{-1}, \\
\rho_{C,D} &= \left(1+\sum_{i = 1}^{N-1} \prod_{k = 1}^i \frac{1+e^{\beta(\Pi_D(k)-\Pi_C(k) - \theta)}}{1+e^{-\beta(\Pi_D(k)-\Pi_C(k)-\theta)}}  \right)^{-1}.
\end{split}
\end{equation*} 

Computing the stationary distribution using  these fixation probabilities, we  obtain the frequency of cooperation  $$\frac{\rho_{D,C}}{\rho_{D,C}+\rho_{C,D}}.$$

Hence, this frequency of cooperation can be maximised by maximising 
\begin{equation}
\label{eq:max}
\max_{\theta} \left(\rho_{D,C}/\rho_{C,D}\right).  
\end{equation} 

The fraction in Equation~\eqref{eq:max} can be simplified as follows \cite{nowak2006} 
\begin{eqnarray}
\nonumber
\frac{\rho_{D,C}}{\rho_{C,D}} &=&  \prod_{k = 1}^{N-1} \frac{u_{i,i-1}}{u_{i,i+1}} =\prod_{k = 1}^{N-1} \frac{1 + e^{\beta[\Pi_C(k)-\Pi_D(k) + \theta]}}{1 + e^{-\beta[\Pi_C(k)-\Pi_D(k) + \theta]}} \\
\nonumber
&=& e^{\beta\sum_{k = 1}^{N-1} \left(\Pi_C(k)-\Pi_D(k) + \theta\right)} \\
\label{eq:max_Q_prime}
 &=& e^{\beta [(N-1)\theta +  \sum_{k=1}^{N-1}\delta_k]},\nonumber\\
 &=& e^{\beta [(N-1)(\theta +  \Delta)]},
 \end{eqnarray} 
 where $u_{i,i-1}$ and $u_{i,i-1}$ are the probabilities  to decrease or increase the number  of $C$ players  (i.e. $i$) by one in each time step, respectively, and $\Delta=
\frac{1}{N-1}\sum_{k=1}^{N-1}\delta_k$ is the average of the payoff differences between a cooperator and a defector.

We consider non-neutral selection, i.e.  $\beta > 0$ (under neutral selection, there is no need to use incentives as the likelihood of a player imitating another is low and any changes in strategy are due to noise as opposed to payoffs). Assuming that we desire to obtain  at least an $\omega  \in [0,1]$ fraction of cooperation, i.e. $\frac{\rho_{D,C}}{\rho_{D,C}+\rho_{C,D}} \geq \omega$, it follows from equation~\eqref{eq:max_Q_prime}  that
\begin{equation} 
\label{eq:omega_fraction}
 \theta \geq \theta_0(\omega) = \frac{1}{(N-1)\beta} \log\left(\frac{\omega}{1-\omega}\right) - \Delta.
\end{equation}
Therefore it is guaranteed that if $\theta  \geq \theta_0(\omega)$, at least an $\omega$ fraction of cooperation can be expected.
This condition implies that the lower bound of $\theta$ monotonically depends on $\beta$. Namely, when $\omega \geq 0.5$, it increases with $\beta$ and when $\omega < 0.5$, it decreases with $\beta$.

\subsubsection*{Cost function}

We now derive the total expected cost of inference over all generations. Let $(n_{ik})_{i,k=1}^{N-1}$ be the entries of the fundamental matrix of the absorbing Markov chain of the evolutionary process. These aforementioned entries give the expected number of times the population is in the state $S_j$ if it has started in the transient state $S_i$ \cite{kemeny1976finite}. The population will start, in the long run, at $i = 0$ ($i = N$, respectively) with probability equal to the frequency of defectors $D$ (cooperators $C$) computed at the equilibrium, $f_D = 1/(r+1)$ ($f_C = r/(r+1)$, respectively), where $r = e^{\beta (N-1)(\Delta +  \theta)}$. Thus, generally, the expected number of visits at state $S_i$ will be $ f_D n_{1i} + f_C n_{N-1,i}$. Therefore, the expected cost of inference over all generations is given by
\begin{equation}
\label{eq: incentives general}
E(\theta) = \sum_{j=1}^{N-1} (f_D n_{1,j} + f_C n_{N-1,j} )\theta_j.
\end{equation}
To summarise, we obtain the following constrained minimisation problem
\begin{equation}
\label{eq: min prob} \min_{\theta\geq \theta_0} E(\theta),
\end{equation} where $E(\theta)$ can be either $E_r(\theta)$, $E_p(\theta)$, or $E_{mix}(\theta)$.

\section{Neutral drift and strong selection limits}
\label{sec: asymptotic limits}

In this section, we compute the fundamental matrix of the Markov chain of the evolutionary process for reward incentives, noting that we obtain similar results for punishment and hybrid incentives (see Appendix \ref{sec: fundamental matrices punishment/hybrid}). 

\noindent For reward, we have:
\begin{equation} 
\label{eq: transition probabilities reward General 2x2 Games}
\begin{split} 
u_{i,i\pm k} &= 0 \qquad \text{ for all } k \geq 2, \\
u_{i,i\pm1} &= \frac{N-i}{N} \frac{i}{N} \left(1 + e^{\mp\beta[\delta_i +\theta_i/i]}\right)^{-1},\\
u_{i,i} &= 1 - u_{i,i+1} -u_{i,i-1}.
\end{split} 
\end{equation}

We normalise $a=1$ for simplicity and obtain (recalling that $\Pi_C(i)-\Pi_D (i)=\delta_i$ and that $\theta_i/i=\frac{\theta}{a}=\theta$ for $1\leq i\leq N-1$):
\begin{equation} 
\label{eq: transition probabilities reward General 2x2 Games normalised}
\begin{split} 
u_{i,i\pm k} &= 0 \qquad \text{ for all } k \geq 2, \\
u_{i,i\pm1} &= \frac{N-i}{N} \frac{i}{N} \left(1 + e^{\mp\beta[\delta_i +\theta]}\right)^{-1},\\
u_{i,i} &= 1 - u_{i,i+1} -u_{i,i-1}.
\end{split} 
\end{equation}
Next, we need to calculate the entries $n_{ik}$ of the fundamental matrix $\mathcal{N}=(n_{ik})_{i,k=1}^{N-1}= (I-U)^{-1}$. By using $\frac{1}{1+m}+\frac{1}{1+\frac{1}{m}}=1$ for $m=e^{\mp\beta[\delta_i +\theta]}$, we get $u_{i,i+1}+u_{i,i-1}=\frac{N-i}{N}\frac{i}{N}$. Then, by letting $V=(I-U)$, we obtain: 
\begin{equation} 
\begin{split} 
v_{i,i\pm k} &= 0 \qquad \text{ for all } k \geq 2, \\
v_{i,i\pm1} &= -\frac{N-i}{N} \frac{i}{N} \left(1 + e^{\mp\beta[\delta_i +\theta]}\right)^{-1},\\
v_{i,i} &= u_{i,i+1} +u_{i,i-1}=\frac{N-i}{N}\frac{i}{N}.
\end{split} 
\end{equation}

We can further write $V=W \mathrm{diag}\Big\{\frac{N-1}{N}\frac{1}{N},\ldots,\frac{N-i}{N}\frac{i}{N},\ldots, \frac{1}{N}\frac{N-1}{N}\Big\}$, where 
\begin{equation}
\label{eq: matrix W General 2x2 Games}
W=\begin{pmatrix}
1&-a_1&&&&&\\
-c_1&1&-a_2&&&&&\\
&\ddots&\ddots&\ddots&&&\\
&&-c_i&1&-a_i&&&\\
&&&-c_{i+1}&1&-a_{i+1}&&&\\
&&&&\ddots&\ddots&\ddots&\\
&&&&&-c_{N-2}&1&-a_{N-1}\\
&&&&&&-c_{N-1}&1
\end{pmatrix},
\end{equation}
with $a_i:=(1+e^{-\beta(\delta_i+\theta)})^{-1}, c_i:=(1+e^{\beta(\delta_i+\theta)})^{-1}$. 

This implies  that
$\mathcal{N}=V^{-1}=\mathrm{diag}\Big\{\frac{N^2}{N-1},\frac{N^2}{2(N-2)},\ldots,\frac{N^2}{N-1}\Big\}W^{-1}$,
and so, the fundamental matrix is $\mathcal{N}=(n_{ik})_{i,k=1}^{N-1}=\mathrm{diag}\Big\{\frac{N^2}{N-1},\frac{N^2}{2(N-2)},\ldots,\frac{N^2}{N-1}\Big\} W^{-1}$. \\

The incentive cost function \eqref{eq: incentives general} can be rewritten using the entries of $W$ as follows:
\begin{align}
\label{eq: cost function with fC and fD}
  E(\theta)&=N^2\sum_{j=1}^{N-1} \frac{\theta_j}{j(N-j)} (f_D W^{-1}_{1,j}+ f_C W^{-1}_{N-1,j})
  \\&=\begin{cases}
  N^2\theta\sum\limits_{j=1}^{N-1} \frac{1}{N-j}(f_D W^{-1}_{1,j}+ f_C W^{-1}_{N-1,j})\quad \text{for reward incentives},\\
  N^2\theta\sum\limits_{j=1}^{N-1} \frac{1}{j}(f_D W^{-1}_{1,j}+ f_C W^{-1}_{N-1,j})\quad \text{for punishment incentives},\\
  N^2\theta\sum\limits_{j=1}^{N-1} \frac{\min(\frac{j}{a},\frac{N-j}{b})}{j(N-j)}(f_D W^{-1}_{1,j}+ f_C W^{-1}_{N-1,j})\quad \text{for hybrid incentives}. \nonumber
  \end{cases}  
\end{align}

Inverting the fundamental matrix of the Markov chain $W$ to obtain the cost function $E(\theta)$ in closed form is challenging. We can, however, get its limiting behaviour in neutral drift ($\beta\rightarrow 0$) and in strong selection ($\beta\rightarrow\infty$) for a specific $2\times2$ game, namely the cooperative Prisoner's Dilemma where $\delta_i+\theta>0$ for all $i$ (as $\delta_i>0$). In the defective Prisoner's Dilemma, $\delta_i<0$ for all $i$. In this situation, when computing the limit of strong selection, we will assume that $\theta$ is small enough so that $\delta_i+\theta<0$ or that $\theta$ is large enough so that $\delta_i+\theta>0$ for all $i$. In the Collective Risk Game, the sign of $\delta_i$ changes with $i$. Therefore, we assume that $\theta$ is large enough so that $\delta_i+\theta>0$ for all $i$.

Before we present the limiting behaviour, we require the following technical lemma:

\begin{lemma}(\cite{Huang1997})
\label{lem: Toeplitz inverse}
Let $T$ be a general Toeplitz tridiagonal matrix of the form
$$
T=\left(\begin{array}{ccccccc}
1 & -u & & & & &  \\
-l & 1 & -u & & & & \\
& \ddots & \ddots & \ddots & & & \\
& & -l & 1 & -u & & \\
& & & \ddots & \ddots & \ddots & \\
& & & & -l & 1 & -u \\
& & & & & -l & 1
\end{array}\right).
$$ 
Then the inverse of $T$ is given by
$$
\left(T^{-1}\right)_{i j}=\frac{\left(\lambda_{+}^{i}-\lambda_{-}^{i}\right)\left(\lambda_{+}^{n-j+1}-\lambda_{-}^{n-j+1}\right)}{\left(\lambda_{+}-\lambda_{-}\right)\left(\lambda_{+}^{n+1}-\lambda_{-}^{n+1}\right)} u^{j-i} \hspace{0.5cm} i<j
$$
$$
\left(T^{-1}\right)_{i j}=\frac{\left(\lambda_{+}^{j}-\lambda_{-}^{j}\right)\left(\lambda_{+}^{n-i+1}-\lambda_{-}^{n-i+1}\right)}{\left(\lambda_{+}-\lambda_{-}\right)\left(\lambda_{+}^{n+1}-\lambda_{-}^{n+1}\right)} l^{i-j} \hspace{0.5cm} i \geq j,
$$
where
\begin{equation*}
\lambda_{+}=\frac{1+\sqrt{1-4 l u}}{2}, \quad \lambda_{-}=\frac{1-\sqrt{1-4 l u}}{2}.
\end{equation*}
\end{lemma}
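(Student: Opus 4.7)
The plan is to reduce the computation of $(T^{-1})_{ij}$ to a linear recurrence for a family of principal minors of $T$, and then to solve that recurrence in closed form. I would first invoke the classical adjugate formula for inverses of tridiagonal matrices: if $T$ is an $n\times n$ tridiagonal matrix with diagonal $d_k$, superdiagonal $b_k$, and subdiagonal $c_k$, then for $i\le j$
\[
(T^{-1})_{ij} \;=\; (-1)^{i+j}\Bigl(\prod_{k=i}^{j-1}b_k\Bigr)\,\frac{\theta_{i-1}\,\phi_{j+1}}{\theta_n},
\]
and symmetrically for $i\ge j$ with the subdiagonal product $\prod c_k$ in place of $\prod b_k$. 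Here $\theta_m$ is the determinant of the leading $m\times m$ block (with $\theta_0=1$) and $\phi_m$ is the determinant of the trailing $(n-m+1)\times(n-m+1)$ block (with $\phi_{n+1}=1$).

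I would next specialise to our Toeplitz setting $d_k=1$, $b_k=-u$, $c_k=-l$. The product $\prod_{k=i}^{j-1}b_k=(-u)^{j-i}$ combines with the sign $(-1)^{i+j}$ to yield $u^{j-i}$, which is exactly the factor appearing in the statement; the analogous simplification produces $l^{i-j}$ in the lower-triangular case. Cofactor expansion along the last row of the leading block then gives the three-term recurrence $\theta_m=\theta_{m-1}-lu\,\theta_{m-2}$ with $\theta_0=\theta_1=1$, and by the Toeplitz symmetry the trailing minors satisfy $\phi_{j+1}=\theta_{n-j}$.

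The final step is to solve this recurrence. Its characteristic equation is $\lambda^{2}-\lambda+lu=0$, whose roots are precisely the $\lambda_{\pm}=(1\pm\sqrt{1-4lu})/2$ of the statement. Writing $\theta_m=A\lambda_+^m+B\lambda_-^m$ and imposing $\theta_0=\theta_1=1$ (using $\lambda_++\lambda_-=1$) fixes $A$ and $B$ and yields the compact form $\theta_m=(\lambda_+^{m+1}-\lambda_-^{m+1})/(\lambda_+-\lambda_-)$. Substituting $\theta_{i-1}$, $\phi_{j+1}=\theta_{n-j}$, and $\theta_n$ into the adjugate formula immediately produces the claimed expression for $(T^{-1})_{ij}$ when $i\le j$; the $i\ge j$ case is identical with $l$ in place of $u$.

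I expect the main obstacle to be invoking, rather than rederiving, the general tridiagonal inverse formula in Step~1 — the derivation via cofactor expansion and the adjugate identity is standard but notationally heavy. A cleaner self-contained alternative is direct verification: define the matrix $M$ by the right-hand sides in the statement and check $TM=I$ entry by entry, using only the identity $\lambda_{\pm}^{2}=\lambda_{\pm}-lu$, which implies that $f(m):=(\lambda_+^{m}-\lambda_-^{m})/(\lambda_+-\lambda_-)$ satisfies $f(m+1)=f(m)-lu\,f(m-1)$. Off-diagonal entries of $TM$ then collapse to zero by this recurrence, diagonal entries collapse to $1$ by combining the recurrence with the boundary value $f(0)=0$, and the edge rows $i=1$ and $i=n$ are handled separately. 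A minor caveat is the degenerate case $1-4lu=0$, in which $\lambda_+=\lambda_-$ and the stated formula must be interpreted as a limit.
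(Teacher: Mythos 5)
Your proposal is correct, but note that the paper itself gives no proof of this lemma at all: it is imported verbatim from Huang and McColl (1997), so there is no internal argument to compare against. Your derivation via the Usmani continuant formula $(T^{-1})_{ij}=(-1)^{i+j}\bigl(\prod_{k=i}^{j-1}b_k\bigr)\theta_{i-1}\phi_{j+1}/\theta_n$ is the standard route and all the steps check out: the sign $(-1)^{i+j}$ does cancel against $(-u)^{j-i}$ to leave $u^{j-i}$, the Toeplitz structure does give $\phi_{j+1}=\theta_{n-j}$, and the recurrence $\theta_m=\theta_{m-1}-lu\,\theta_{m-2}$ with $\theta_0=\theta_1=1$ does solve to $\theta_m=(\lambda_+^{m+1}-\lambda_-^{m+1})/(\lambda_+-\lambda_-)$, which reproduces the stated formula exactly. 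Two small remarks. First, in your direct-verification alternative the claim that the diagonal entries of $TM$ ``collapse to $1$'' is not purely the recurrence plus $f(0)=0$; it requires the product identity $f(i+1)f(n-i+1)-lu\,f(i)f(n-i)=f(n+1)$, a d'Ocagne-type identity that needs its own (easy inductive) proof, so the adjugate route is the cleaner one to write up. Second, your caveat about the degenerate case $1-4lu=0$ is more than a formality here: the paper's neutral-drift computation is exactly the case $l=u=\tfrac12$, where $\lambda_+=\lambda_-$ and the formula only makes sense as a limit (the authors sidestep this by inverting the Cartan matrix directly), whereas the strong-selection applications have $lu=0$ and are nondegenerate.
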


\begin{lemma}(Neutral drift limit)
\label{lem: neutral drift limit paper 3}
For both a general $2\times 2$ game and the Collective Risk Game, we have, for reward incentives,
$$\lim_{\beta\to 0}E_r(\theta)= N^2\theta H_N,$$
where $H_N$ is the harmonic number
\[
H_N=\sum_{i=1}^{N-1}\frac{1}{j}.
\]
\end{lemma}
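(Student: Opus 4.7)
The plan is to pass to the limit $\beta \to 0$ in each factor of the cost formula~\eqref{eq: cost function with fC and fD} and then recombine. I would carry out the argument for reward incentives in General $2\times 2$ Games; the Collective Risk Game case follows identically, since in the neutral drift regime only the birth--death structure of the chain (through $\frac{N-i}{N}\frac{i}{N}$ and a symmetric Fermi factor) enters the limit, not the specific form of $\delta_i$.

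First I would observe that as $\beta \to 0$ one has $a_i, c_i \to \tfrac{1}{2}$, so the matrix $W$ in~\eqref{eq: matrix W General 2x2 Games} converges to the constant tridiagonal Toeplitz matrix with $1$ on the diagonal and $-\tfrac{1}{2}$ on each off-diagonal. Simultaneously $r = e^{\beta(N-1)(\Delta+\theta)} \to 1$, so $f_D, f_C \to \tfrac{1}{2}$. The trouble is that the limiting $W$ lies in the degenerate regime of Lemma~1, where $1 - 4lu \to 0$ and hence $\lambda_+ - \lambda_- \to 0$, so the closed form cannot be applied naively.

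To circumvent this I would set $s := \tfrac{1}{2}\sqrt{1-4lu}$, so that $\lambda_{\pm} = \tfrac{1}{2}\pm s$ and $\lambda_+ - \lambda_- = 2s$. A binomial expansion retaining only odd indices gives
\[
\lambda_+^k - \lambda_-^k = 2\sum_{\substack{j\geq 1\\ j\text{ odd}}} \binom{k}{j}\Big(\tfrac{1}{2}\Big)^{k-j} s^j = \frac{k s}{2^{k-2}} + O(s^3),
\]
hence $(\lambda_+^k - \lambda_-^k)/(\lambda_+ - \lambda_-) \to k/2^{k-1}$. Substituting into both branches of Lemma~1 with $u, l \to \tfrac{1}{2}$, the powers of $2$ collapse and I would obtain, for every $1 \leq j \leq N-1$,
\[
W^{-1}_{1,j} \longrightarrow \frac{2(N-j)}{N}, \qquad W^{-1}_{N-1,j} \longrightarrow \frac{2j}{N},
\]
where the consistency of the two branches at the endpoints $j = 1$ and $j = N-1$ serves as a sanity check.

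Finally, plugging these values together with $f_D = f_C = \tfrac{1}{2}$ into the reward version of~\eqref{eq: cost function with fC and fD}, the key cancellation
\[
\frac{1}{N-j}\Big(\tfrac{1}{2}\cdot \tfrac{2(N-j)}{N} + \tfrac{1}{2}\cdot \tfrac{2j}{N}\Big) = \frac{(N-j)+j}{N(N-j)} = \frac{1}{N-j}
\]
trivialises the weight inside the sum, and a reindexing $k = N-j$ yields $\lim_{\beta\to 0} E_r(\theta) = N^2\theta\sum_{k=1}^{N-1}\tfrac{1}{k} = N^2\theta H_N$. The main obstacle is precisely the degenerate limit $\lambda_+ = \lambda_-$ in Lemma~1; once the small-$s$ expansion produces the limiting ratios above, the remainder of the argument is a clean algebraic simplification driven by the identity $(N-j)+j = N$ that collapses the two fundamental-matrix contributions into the harmonic sum.
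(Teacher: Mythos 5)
Your proposal is correct and follows the same overall skeleton as the paper's proof: pass to the limit in $W$, $f_C$, $f_D$, identify the entries $\bar W^{-1}_{1,j}=\tfrac{2(N-j)}{N}$ and $\bar W^{-1}_{N-1,j}=\tfrac{2j}{N}$, and collapse the sum to the harmonic number via $(N-j)+j=N$. The one technical step where you diverge is the inversion of the limiting tridiagonal matrix. The paper avoids the degenerate case $\lambda_+=\lambda_-$ of Lemma~\ref{lem: Toeplitz inverse} entirely by writing $\bar W=\tfrac12 C$ with $C$ the Cartan matrix and quoting its known inverse $\min(i,j)-\tfrac{ij}{n+1}$, which immediately gives $(\bar W^{-1})_{ij}=2\bigl[\min(i,j)-\tfrac{ij}{N}\bigr]$. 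You instead keep $\beta>0$, note that $1-4lu=(a-c)^2>0$ so the two roots are distinct, and extract the limit through the expansion $(\lambda_+^k-\lambda_-^k)/(\lambda_+-\lambda_-)\to k/2^{k-1}$; I checked that this yields $\tfrac{2i(N-j)}{N}$ for $i<j$ and $\tfrac{2j(N-i)}{N}$ for $i\ge j$, consistent with the Cartan formula. Your route is slightly more self-contained (it uses only the lemma already stated in the paper, and it computes $\lim_{\beta\to 0}W^{-1}$ directly rather than inverting the limit matrix), at the cost of a more delicate asymptotic argument; the paper's route is shorter but imports an external identity. The final algebra and the observation that the argument is game-independent in the neutral limit match the paper exactly.
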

\begin{proof}
\noindent It follows from the formula of the entries of the fundamental matrix $W$ given in \eqref{eq: matrix W General 2x2 Games} that, in the limit of neutral drift, i.e. when $\beta\rightarrow0$, we have $\lim_{\beta\rightarrow 0} W=\bar W$, where
\begin{align}
\bar W= \frac{1}{2}\begin{pmatrix} 
2&-1&&&&&\\
-1&2&-1&&&&&\\
&\ddots&\ddots&\ddots&&&\\
&&-1&2&-1&&&\\
&&&-1&2&-1&&&\\
&&&&\ddots&\ddots&\ddots&\\
&&&&&-1&2&-1\\
&&&&&&-1&2
\end{pmatrix}.
\end{align}
The tridiagonal matrix above is the well-known Cartan matrix. It follows from \cite{wei2017inverses} that $(\bar{W}^{-1})_{i,j}= 2\Big[\min(i,j)-\frac{ij}{N}\Big]$, for $1\leq i,j\leq N-1.$

Then, noting that  $\lim\limits_{\beta\to 0} f_C = \lim\limits_{\beta\to 0} f_D = \frac{1}{2}$, we have
\begin{equation}
\label{eq: neutral drift limit}
    \lim_{\beta\to 0}E(\theta)= \lim_{\beta\to 0}\sum_{j=1}^{N-1} (f_Dn_{1,j} + f_Cn_{N-1,j} )\theta_j = \frac{N^2}{2}\sum_{j=1}^{N-1} \frac{(\bar W^{-1}_{1,j} + \bar W^{-1}_{N-1,j})}{j(N-j)}\theta_j.
\end{equation}

Assuming without loss of generality that $a = 1$ and substituting $\theta_j=j\theta$ in the case of reward, we obtain
\begin{align*}
    \lim_{\beta\to 0}E_r(\theta) &=\frac{N^2}{2}\sum_{j=1}^{N-1} \frac{(\bar W^{-1}_{1,j} + \bar W^{-1}_{N-1,j})}{j(N-j)} \theta_j \\
    &= \frac{N^2}{2}\sum_{j=1}^{N-1} \frac{(\bar W^{-1}_{1,j} + \bar W^{-1}_{N-1,j})}{N-j}\theta \\
    &= \frac{N^2}{2}\Big[\sum_{j=1}^{N-1} \frac{\bar W^{-1}_{1,j}}{N-j} \theta + \sum_{j=1}^{N-1} \frac{\bar W^{-1}_{N-1,j}}{N-j} \theta\Big].
\end{align*} 
Using the fact that $\bar W^{-1}_{1,j}=\frac{2(N-j)}{N}$ and $\bar W^{-1}_{N-1,j}=\frac{2j}{N}$, we have
\begin{align*}
    \lim_{\beta\to 0}E_r(\theta) &= \frac{N^2}{2}\Big[\sum_{j=1}^{N-1} \frac{\bar W^{-1}_{1,j}}{N-j} \theta + \sum_{j=1}^{N-1} \frac{\bar W^{-1}_{N-1,j}}{N-j} \theta\Big] \\
    &= \frac{N^2}{2}\Big[\sum_{j=1}^{N-1} \frac{2(N-j)}{N(N-j)} \theta + \sum_{j=1}^{N-1} \frac{2j}{N(N-j)} \theta\Big] \\
    &= \frac{N^2\theta}{2}\frac{2}{N}\Big[ \sum_{j=1}^{N-1} 1 + \sum_{j=1}^{N-1} \frac{j}{N-j} \Big] \\
    &= N \theta \Big[ (N-1)+ \sum_{k=1}^{N-1} \frac{N-k}{k} \Big] \\
    &=N \theta\Big[ (N-1) + \sum_{k=1}^{N-1} (\frac{N}{k} - 1) \Big] \\
    &= N\theta \Big[ (N-1) + \sum_{k=1}^{N-1} \frac{N}{k} - \sum_{k=1}^{N-1} 1 \Big] \\
    &= N \theta\Big[ (N-1) + NH_{N}-(N-1) \Big] \\
    &= N^2 \theta H_{N}.
\end{align*}
The limiting behaviour for punishment and hybrid incentives can be found in Appendix \ref{sec: appendix neutral drift}.
\end{proof}

\begin{remark}
We notice that the neutral limit of the cost function does not depend on the specific game. This has been shown previously in our papers~ \cite{duong2021cost,DuongDurbacHan2024} for the Donation Game and the Public Goods Game. This is expected since, in the neutral drift, the underlying game does not influence the evolutionary dynamics.

Note further that, in this paper, the method we employed to compute the neutral drift limit is different from that in \cite{duong2021cost,DuongDurbacHan2024}. More precisely, in \cite{duong2021cost,DuongDurbacHan2024},
we were able to compute explicitly the cost functions and then derive the neutral drift limit. In this work, we cannot compute explicitly the cost functions since inverting the matrix $W$ (needed for a closed-form of these) is intractable. Therefore, we obtain the neutral drift limit of the cost functions by calculating this limit for the transition matrix first.
\end{remark}

\begin{figure}[H]
\centering
\hspace*{-0.5cm}  
\includegraphics[width=\textwidth]{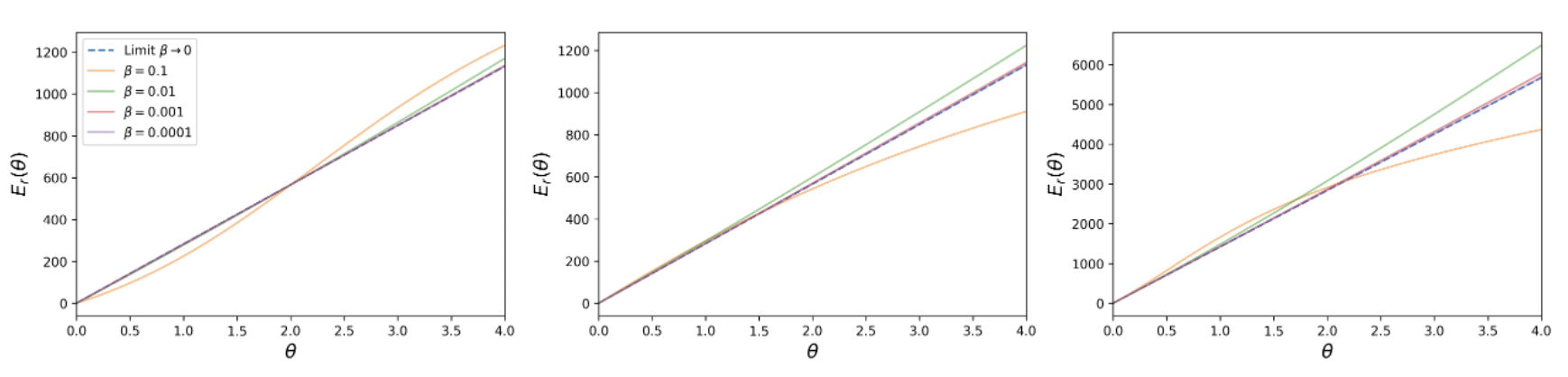}
\caption{The dashed blue line represents the neutral drift limit of the reward cost function. The three figures illustrate this behaviour for: (i) a defective Prisoner's Dilemma (left; $R = 2$, $S = 0$, $T = 4$, $P = 1$), (ii) a cooperative Prisoner's Dilemma (centre; $R = 1$, $S = 4$, $T = 0$, $P = 2$), and (iii) the Collective Risk Game (right; $B = 1$, $c = 0.1$, $r = 0.9$, $n = 5$, $m = 2$). As the intensity of selection $\beta$ decreases, the reward cost function converges towards the neutral drift limit, in accordance to our theoretical results in Lemma \ref{lem: neutral drift limit paper 3}.} 
\label{fig: neutral drift limit PD}
\end{figure}

\begin{lemma}(Strong selection limit)
\label{lem: strong selection limit paper 3}
\begin{enumerate}
    \item \label{lem: ssl part 1} Defective Prisoner's Dilemma - Recall that, for the defective Prisoner's Dilemma, $\delta_j < 0$ for all $j$. Assume that $\theta$ is sufficiently small so that $\delta_j + \theta < 0$ for all $j$. Then, in the limit of strong selection, under reward incentives, 
\[
\lim_{\beta \to \infty} E_r(\theta) = \frac{N^2\theta}{N-1}.
\]
\item \label{lem: ssl part 2} Cooperative and defective Prisoner's Dilemma, the Collective Risk Game - Recall that, for the cooperative Prisoner's Dilemma, $\delta_j > 0$ for all $j$. Assume that $\theta$ is sufficiently large so that $\delta_j + \theta > 0$ for all $j$ also holds in the defective Prisoner's Dilemma and in the Collective Risk Game. Then, in the limit of strong selection, under reward incentives, 
\[
\lim_{\beta \to \infty} E_r(\theta) = N^2 \theta
\]
\end{enumerate}
\end{lemma}
\begin{proof}

The fundamental matrix $W$ transforms as following in the limit of strong selection, i.e. when $\beta\rightarrow\infty$, in part \ref{lem: ssl part 1} and part \ref{lem: ssl part 2} of the lemma:
\begin{equation}
\label{eq: matrix W in beta infy limit for delta <= 0}
\hat W=\begin{pmatrix}
1&0&&&&&\\
-1&1&0&&&&&\\
&\ddots&\ddots&\ddots&&&\\
&&-1&1&0&&&\\
&&&-1&1&0&&&\\
&&&&\ddots&\ddots&\ddots&\\
&&&&&-1&1&0\\
&&&&&&-1&1
\end{pmatrix},
\end{equation}
and
\begin{equation}
\label{eq: matrix W in beta infy limit for delta > 0}
\overline{W} =\begin{pmatrix} 
1&-1&&&&&\\
0&1&-1&&&&&\\
&\ddots&\ddots&\ddots&&&\\
&&0&1&-1&&&\\
&&&0&1&-1&&&\\
&&&&\ddots&\ddots&\ddots&\\
&&&&&0&1&-1\\
&&&&&&0&1
\end{pmatrix}.
\end{equation}

Using Lemma \ref{lem: Toeplitz inverse}, we can compute the matrix inverse for both \eqref{eq: matrix W in beta infy limit for delta <= 0} and \eqref{eq: matrix W in beta infy limit for delta > 0} as follows.\\

For \eqref{eq: matrix W in beta infy limit for delta <= 0}, $\lambda_+=1$ and $\lambda_-=0$, thus
\begin{equation*}
\hat W^{-1}_{i,j} = 
\begin{cases}
     0 \hspace{0.5cm} i<j \quad \\
     1 \hspace{0.5cm} i\geq j 
\end{cases}    
\end{equation*}

Therefore, $\hat W^{-1} =\begin{pmatrix}
1      & 0      & 0      & \cdots & 0 \\
1      & 1      & 0      & \cdots & 0 \\
1      & 1      & 1      & \cdots & 0 \\
\vdots & \vdots & \vdots & \ddots & \vdots \\
1      & 1      & 1      & \cdots & 1 \\
\end{pmatrix}.$\\

For \eqref{eq: matrix W in beta infy limit for delta > 0}, $\lambda_+=1$ and $\lambda_-=0$, thus
\begin{equation*}
\overline{W}^{-1}_{i,j} = 
\begin{cases}
     1 \hspace{0.5cm} i\leq j \quad \\
     0 \hspace{0.5cm} i> j 
\end{cases}    
\end{equation*}

Therefore, $\overline{W}^{-1} =\begin{pmatrix}
1 & 1 & 1 & \cdots & 1 \\
0 & 1 & 1 & \cdots & 1 \\
0 & 0 & 1 & \cdots & 1 \\
\vdots & \vdots & \vdots & \ddots & \vdots \\
0 & 0 & 0 & \cdots & 1 \\
\end{pmatrix}.$\\

In the limit of strong selection, $f_D = \frac{1}{e^{\beta (N-1)(\Delta + \theta)}+1}$ and $f_C = \frac{e^{\beta (N-1)(\Delta +  \theta)}}{e^{\beta (N-1)(\Delta +  \theta)}+1}$ transform as follows.

Recall $\Delta=
\frac{1}{N-1}\sum_{j=1}^{N-1}\delta_j$. Then
$$
\Delta+\theta= \frac{1}{N-1}\sum_{j=1}^{N-1}\delta_j + \theta = \frac{1}{N-1}\sum_{j=1}^{N-1}\Big(\delta_j+\theta\Big).
$$

In part \ref{lem: ssl part 1}, as $\delta_j+\theta<0$, we have $\Delta+\theta<0$. Similarly, in part \ref{lem: ssl part 2}, since $\delta_j+\theta>0$, we get that $\Delta+\theta>0$. Thus,

\[
\lim_{\beta \to \infty} f_C =
\begin{cases}
0 & \text{for part \ref{lem: ssl part 1}}\\
1 & \text{for part \ref{lem: ssl part 2}}
\end{cases}
\quad
\lim_{\beta \to \infty} f_D =
\begin{cases}
1 & \text{for part \ref{lem: ssl part 1}} \\
0 & \text{for part \ref{lem: ssl part 2}}.
\end{cases}
\]

Therefore, by rewriting the cost function $E(\theta)$ from Equation \eqref{eq: cost function with fC and fD}, we get:

\[
\lim_{\beta \to \infty} E(\theta) =
\begin{cases}
N^2 \sum_{j=1}^{N-1} \frac{\overline{W}_{N-1,j}}{j(N-j)}\theta_j & \text{for } \text{part \ref{lem: ssl part 1}} \\ 
N^2 \sum_{j=1}^{N-1} \frac{\hat W_{1,j}}{j(N-j)}\theta_j  & \text{for } \text{part \ref{lem: ssl part 2}}.
\end{cases}
\]

Substituting the values of $\hat W_{i,j}$ and $\overline{W} _{i,j}$ computed in \eqref{eq: matrix W in beta infy limit for delta <= 0} and \eqref{eq: matrix W in beta infy limit for delta > 0}, respectively, as well as the definition of $\theta_j$ from Equation \eqref{eq: incentives per generation}, we get

\[
\lim_{\beta \to \infty} E(\theta) =
\begin{cases}
\frac{N^2\theta}{N-1} & \text{for } \text{part \ref{lem: ssl part 1}} \\ 
N^2 \theta & \text{for } \text{part \ref{lem: ssl part 2}}.
\end{cases}
\]
\end{proof}

\begin{figure}[H]
\centering
\hspace*{-0.5cm}  
\includegraphics[width=\textwidth]{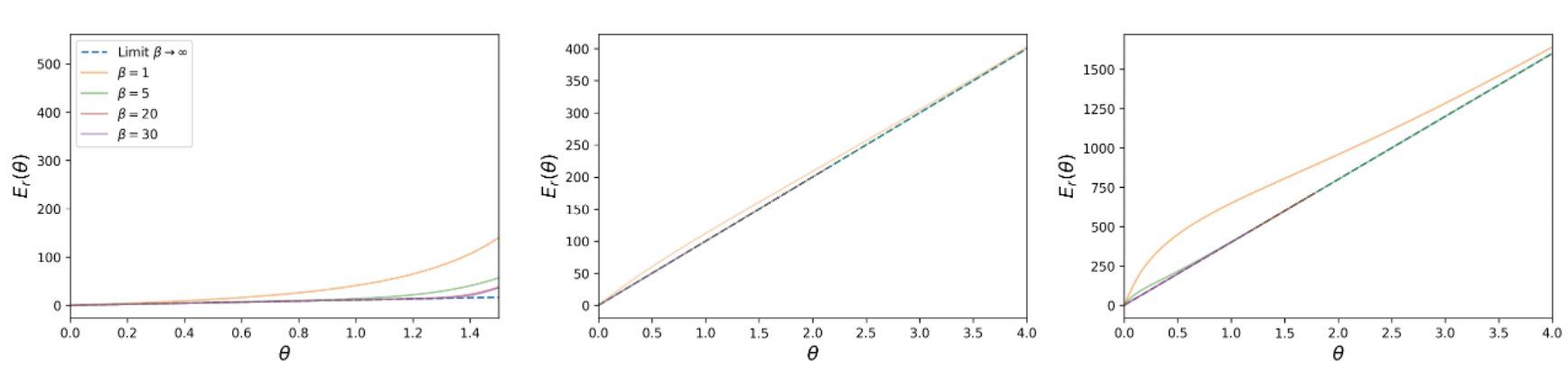}
\caption{The dashed blue line represents the strong selection limit of the reward cost function. The three figures correspond to: (i) a defective Prisoner's Dilemma (left; $R = 2$, $S = 0$, $T = 4$, $P = 1$; restricted such that $\delta + \theta < 0$), (ii) a cooperative Prisoner's Dilemma (centre; $R = 1$, $S = 4$, $T = 0$, $P = 2$), and (iii) the Collective Risk Game (right; $B = 1$, $c = 0.1$, $r = 0.9$, $n = 5$, $m = 2$; restricted such that $\delta + \theta > 0$). As the intensity of selection $\beta$ increases, the reward cost function converges towards the strong selection limit, in accordance to our theoretical results in Lemma \ref{lem: neutral drift limit paper 3}.} 
\label{fig: strong selection limit PD}
\end{figure}

\section{Numerical simulations}
\label{sec: numerical simulations}

This section contains the results of our numerical analysis, coded in Python 3.10. We employ a logarithmic scale to respond to skewness of larger values and, in the case of the Collective Risk Game, we restrict the range of $\theta$ to highlight the phase transition.

\subsection{General $2\times2$ Games}

\begin{figure}[H]
\centering
\hspace*{-0.5cm} 
\includegraphics[width=\textwidth]{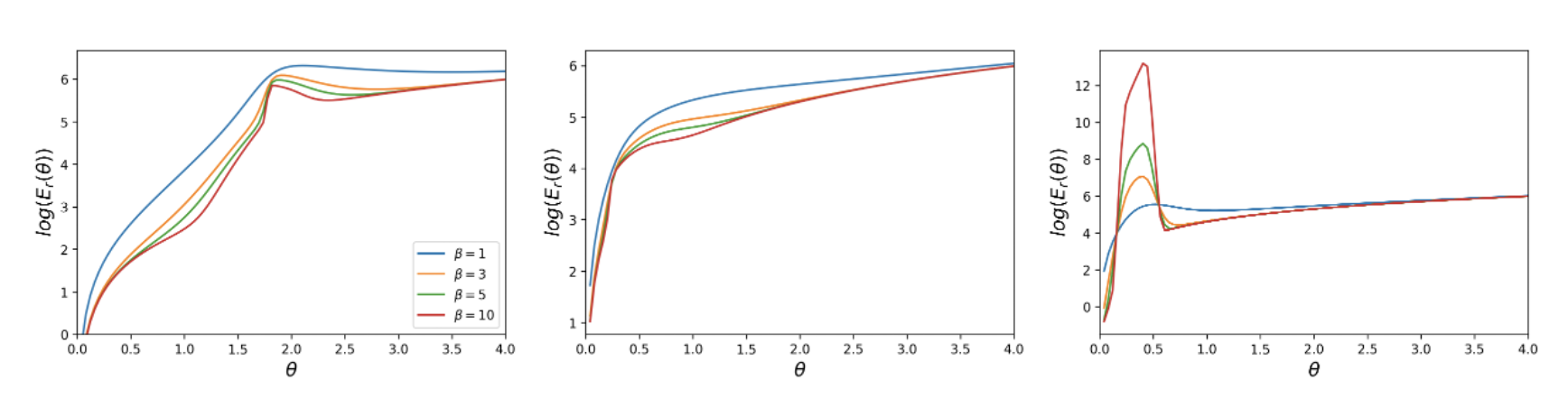}
\caption{Comparison of the reward cost function $E_r(\theta)$ for the General $2 \times 2$ Games: Prisoner's Dilemma ($R = 2$, $S = 0$, $T = 4$, $P = 1$), Hawk and Dove ($R = 2$, $S = 1$, $T = 3$, $P = 0$), and Stag Hunt ($R = 3$, $S = 0$, $T = 2$, $P = 1$), shown in this order. Each figure illustrates the behaviour of $E_r(\theta)$ for various values of the intensity of selection $\beta$, with population size $N = 10$.} 
\label{fig: reward comparison}
\end{figure}

\begin{figure}[H]
\centering
\hspace*{-0.5cm} 
\includegraphics[width=\textwidth]{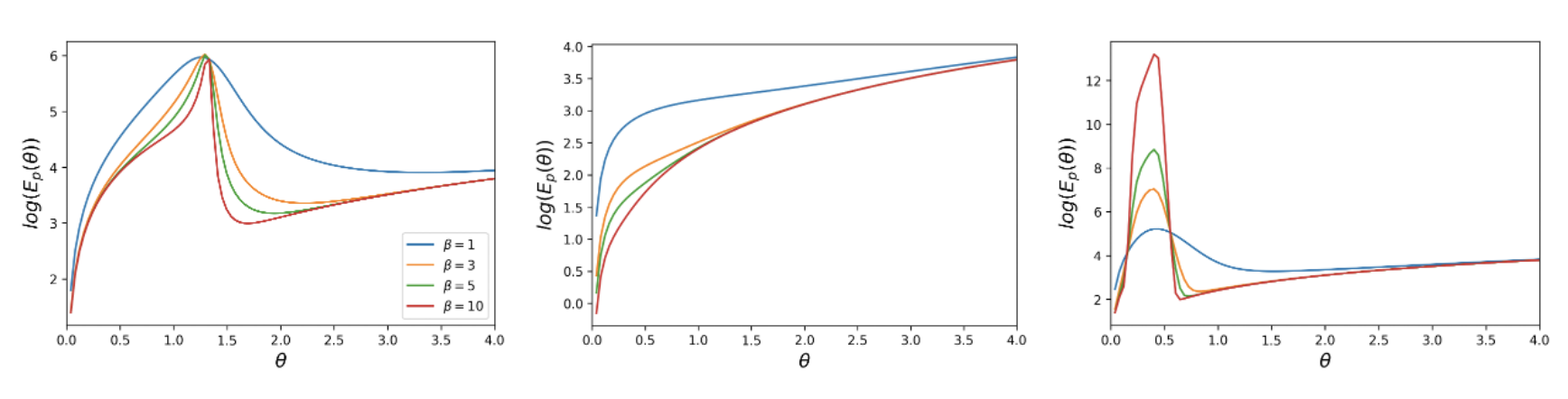}
\caption{Comparison of the punishment cost function $E_p(\theta)$ for the General $2 \times 2$ Games: Prisoner's Dilemma ($R = 2$, $S = 0$, $T = 4$, $P = 1$), Hawk and Dove ($R = 2$, $S = 1$, $T = 3$, $P = 0$), and Stag Hunt ($R = 3$, $S = 0$, $T = 2$, $P = 1$), presented in this order. The plots show the variation of $E_p(\theta)$ under different intensities of selection $\beta$, for a population size of $N = 10$.} 
\label{fig: punishment comparison}
\end{figure}

\begin{figure}[H]
\centering
\hspace*{-0.5cm} 
\includegraphics[width=\textwidth]{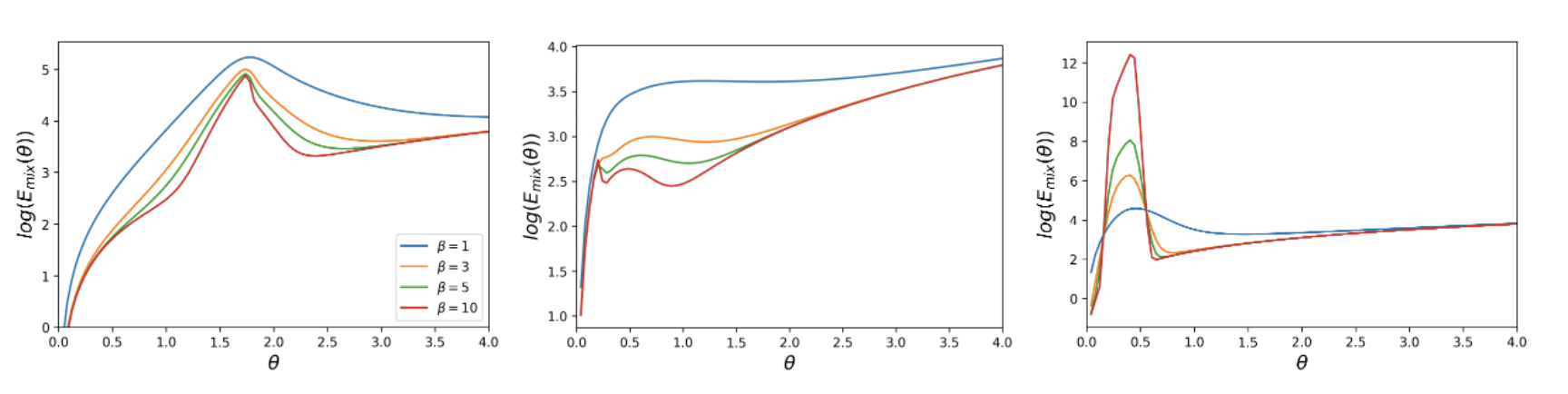}
\caption{Comparison of the hybrid cost function $E_{mix}(\theta)$ for the General $2 \times 2$ Games: Prisoner's Dilemma ($R = 2$, $S = 0$, $T = 4$, $P = 1$), Hawk and Dove ($R = 2$, $S = 1$, $T = 3$, $P = 0$), and Stag Hunt ($R = 3$, $S = 0$, $T = 2$, $P = 1$), in that order. The cost function is evaluated across various values of the intensity of selection $\beta$, with population size $N = 10$.} 
\label{fig: hybrid comparison}
\end{figure}

\begin{figure}[H]
\centering
\hspace*{-0.5cm}  
\includegraphics[width=\textwidth]{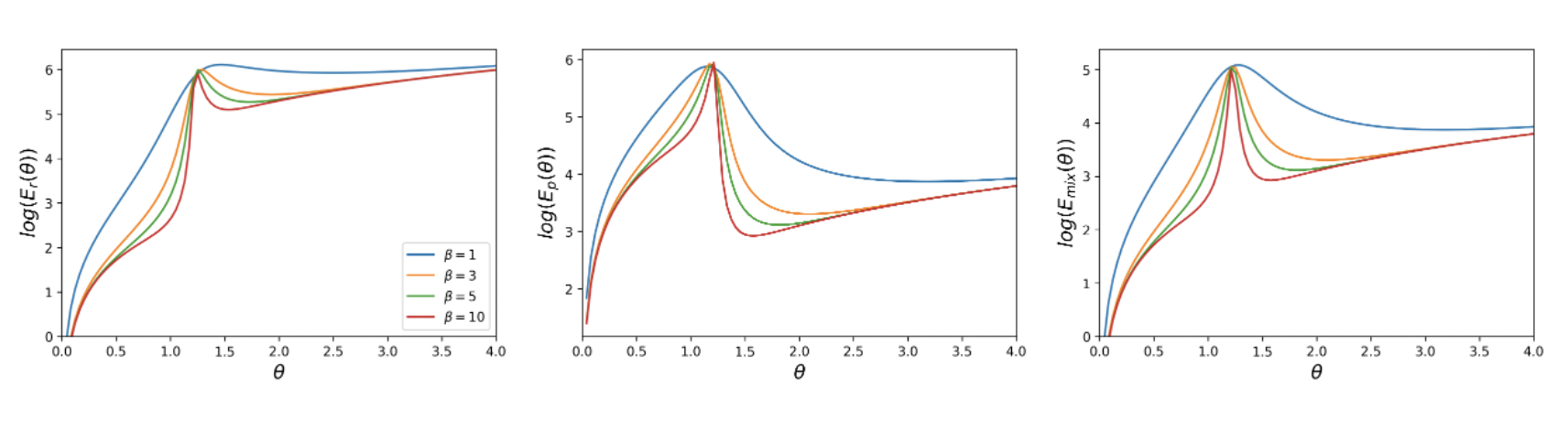}
\caption{Reward, punishment, and hybrid cost functions under varying values of the intensity of selection $\beta$ for the Donation Game, a special case of a $2 \times 2$ game satisfying $R + P = T + S$. This setting has been analysed in detail in \cite{duong2021cost, DuongDurbacHan2022, DuongDurbacHan2024}.} 
\label{fig: comparison DG}
\end{figure}

The first images in Figures \ref{fig: reward comparison}, \ref{fig: punishment comparison}, \ref{fig: hybrid comparison} illustrates the behaviour of the reward, punishment, and hybrid cost functions in the case of the Prisoner's Dilemma ($R+P\neq T+S$). We highlight the phase transition that emerges as $\beta$ increases: the cost functions initially rise, then decline, and subsequently rise again. This pattern aligns with prior analytical results for the Donation Game ($R+P=T+S$) obtained in \cite{duong2021cost,DuongDurbacHan2022,DuongDurbacHan2024}.
While these works assumed that the dynamics starts either in the state of all defectors $S_0$ or equally likely in the homogeneous states $S_0$ or $S_N$, herein we assume a general starting point. It should be noted that the phase transition behaviour persists even in this generalised setting. The increase in the cost per capita $\theta$ followed by a sharp decline as observed in the non-monotonicity of the cost function at higher intensities of selection could be explained by a large number of players switching to cooperation as a result of incentive use. 

In the Hawk and Dove game (second images in Figures \ref{fig: reward comparison}, \ref{fig: punishment comparison}, \ref{fig: hybrid comparison}), the reward cost function consistently increases without a transition (Figure \ref{fig: reward comparison} second image), while a phase transition can be observed for the punishment and the hybrid cost functions when the intensity of selection $\beta$ is large enough (Figures \ref{fig: punishment comparison}, \ref{fig: hybrid comparison}). Although for larger values of $\beta$ the cost function starts behaving non-monotonically, we note that the trend is, after the initial fluctuations, for the function to increase steadily. By contrast, in the Stag Hunt game (third images in Figures \ref{fig: reward comparison}, \ref{fig: punishment comparison}, \ref{fig: hybrid comparison}), we observe a bottleneck effect: as $\beta$ grows, costs initially rise to a peak before sharply decreasing and appearing to stabilise across all values of the intensity of selection (Figures \ref{fig: reward comparison}, \ref{fig: punishment comparison}, \ref{fig: hybrid comparison} third images). These findings highlight the influence of the game structure on the behaviour of incentive costs which seems to play a crucial role in determining whether the cost functions follow a monotonic or non-monotonic trajectory.

\subsection{Collective Risk Game}

\begin{figure}[H]
\centering
\includegraphics[width=\textwidth]{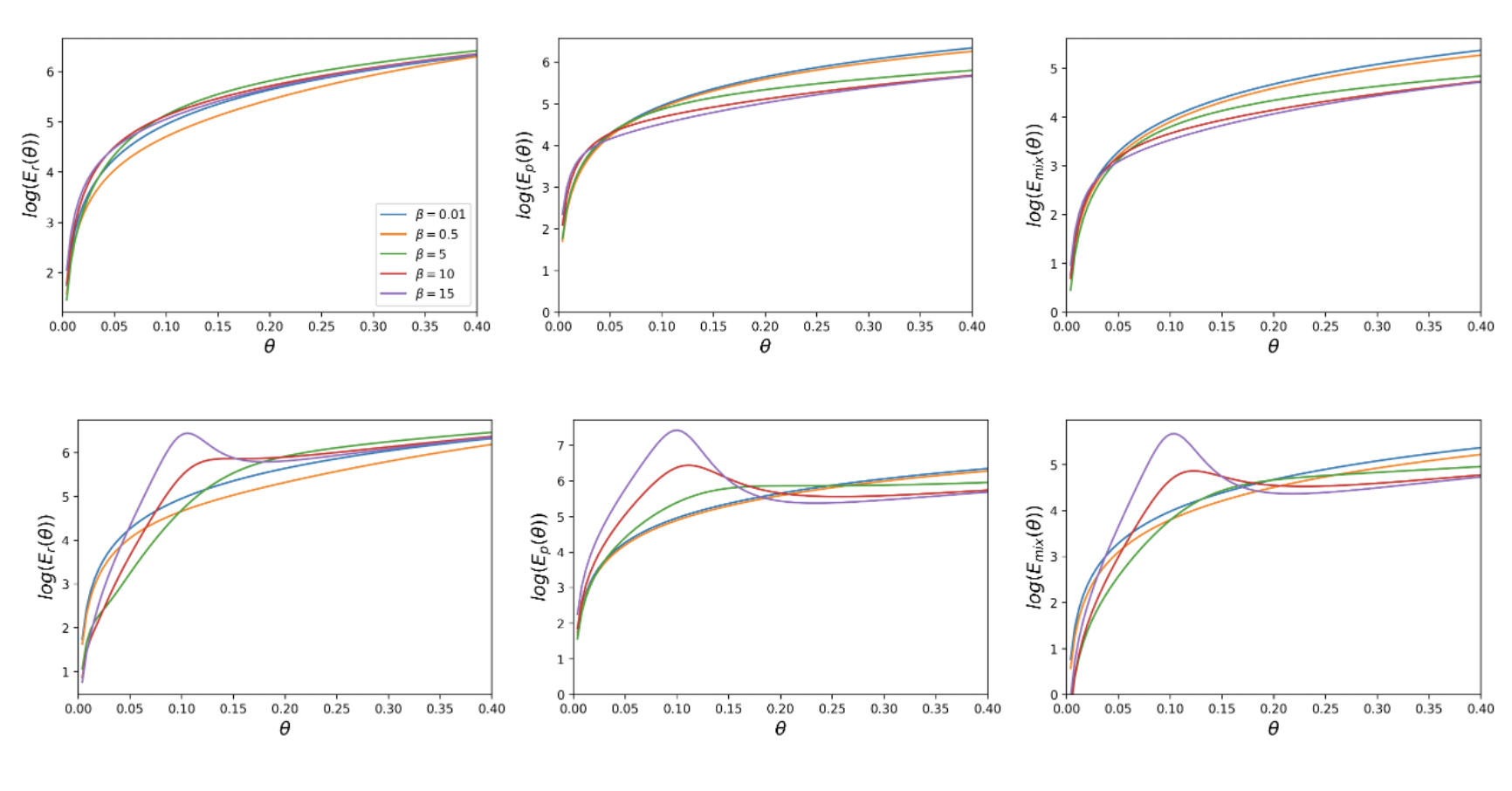}
\caption{Comparison of the cost functions for reward ($E_r(\theta)$; first column), punishment ($E_p(\theta)$; second column), and hybrid incentives ($E_{mix}(\theta)$; third column) in the Collective Risk Game with $B = 1$, $c = 0.1$, $n = 5$, $m = 2$, and population size $N = 20$. The first row corresponds to a lower risk level ($r = 0.3$) and the second row to a higher one ($r = 0.7$). Each figure shows the effect of varying the intensity of selection $\beta$ on the corresponding cost function.} 
\label{fig: CRG comparison}
\end{figure}

\begin{figure}[H]
\centering
\hspace*{-1cm}  
\includegraphics[width=0.7\textwidth]{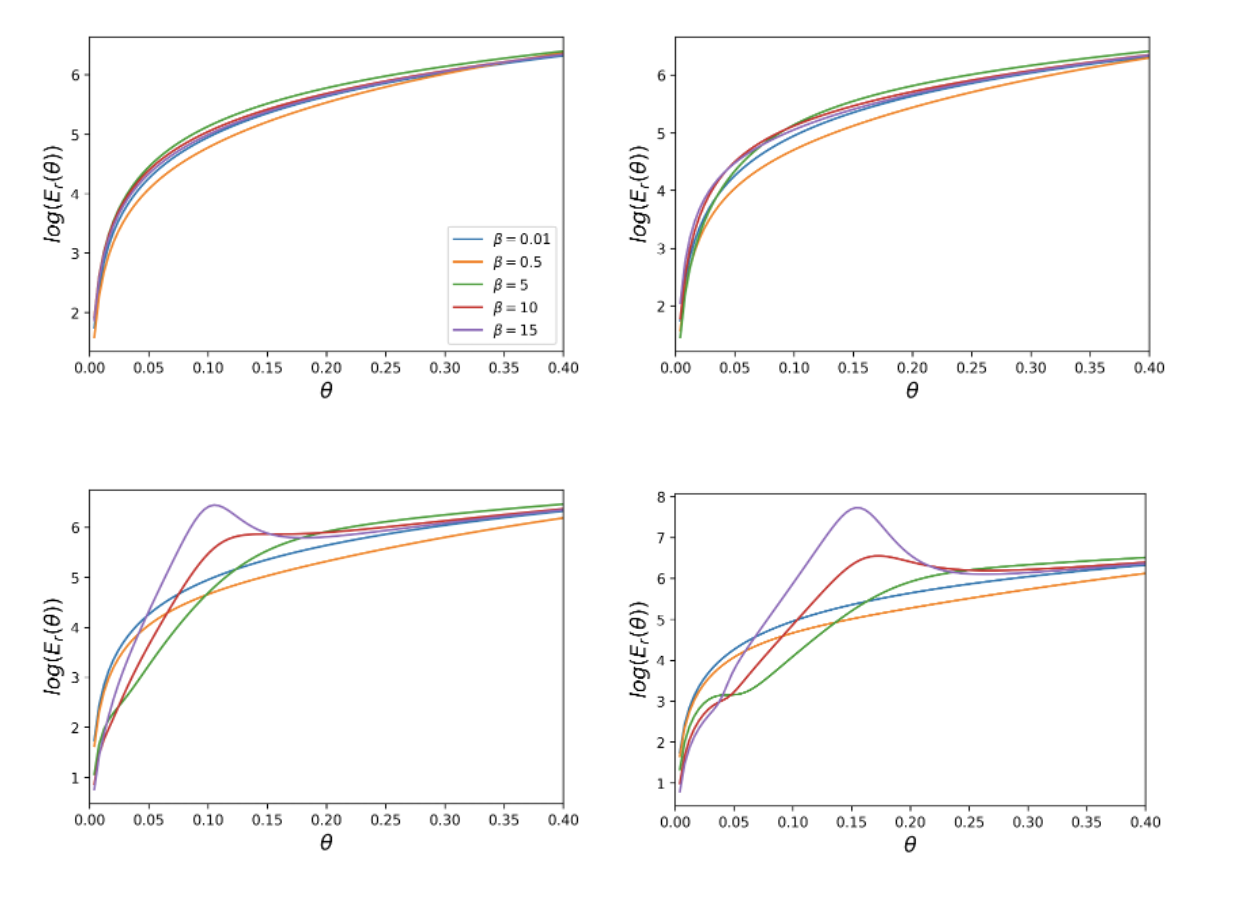}
\caption{Behaviour of the reward cost function $E_r(\theta)$ in the Collective Risk Game for varying intensities of selection $\beta$ and risk probabilities $r = 0.01$, $0.3$, $0.7$, and $0.9$ (in this order). Parameters are fixed at $B = 1$, $c = 0.1$, $n = 5$, $m = 2$, and $N = 20$.} 
\label{fig: comparison CRG reward}
\end{figure}

\begin{figure}[H]
\centering
\hspace*{-1cm}  
\includegraphics[width=0.7\textwidth]{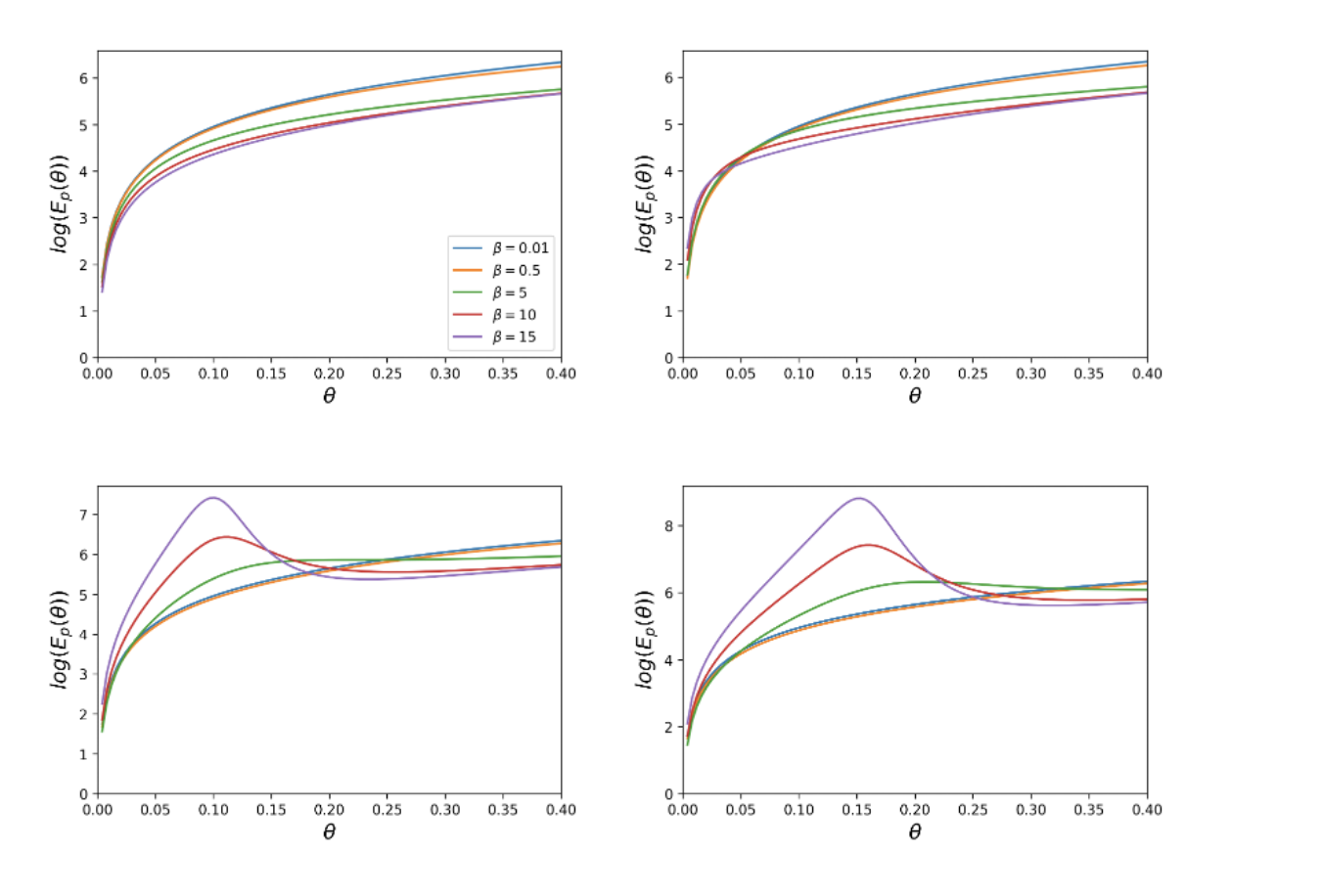}
\caption{Behaviour of the punishment cost function $E_p(\theta)$ in the Collective Risk Game for varying intensities of selection $\beta$ and risk probabilities $r = 0.01$, $0.3$, $0.7$, and $0.9$ (in this order). Parameters are fixed at $B = 1$, $c = 0.1$, $n = 5$, $m = 2$, and $N = 20$.} 
\label{fig: comparison CRG punishment}
\end{figure}

\begin{figure}[H]
\centering
\hspace*{-1cm}  
\includegraphics[width=0.7\textwidth]{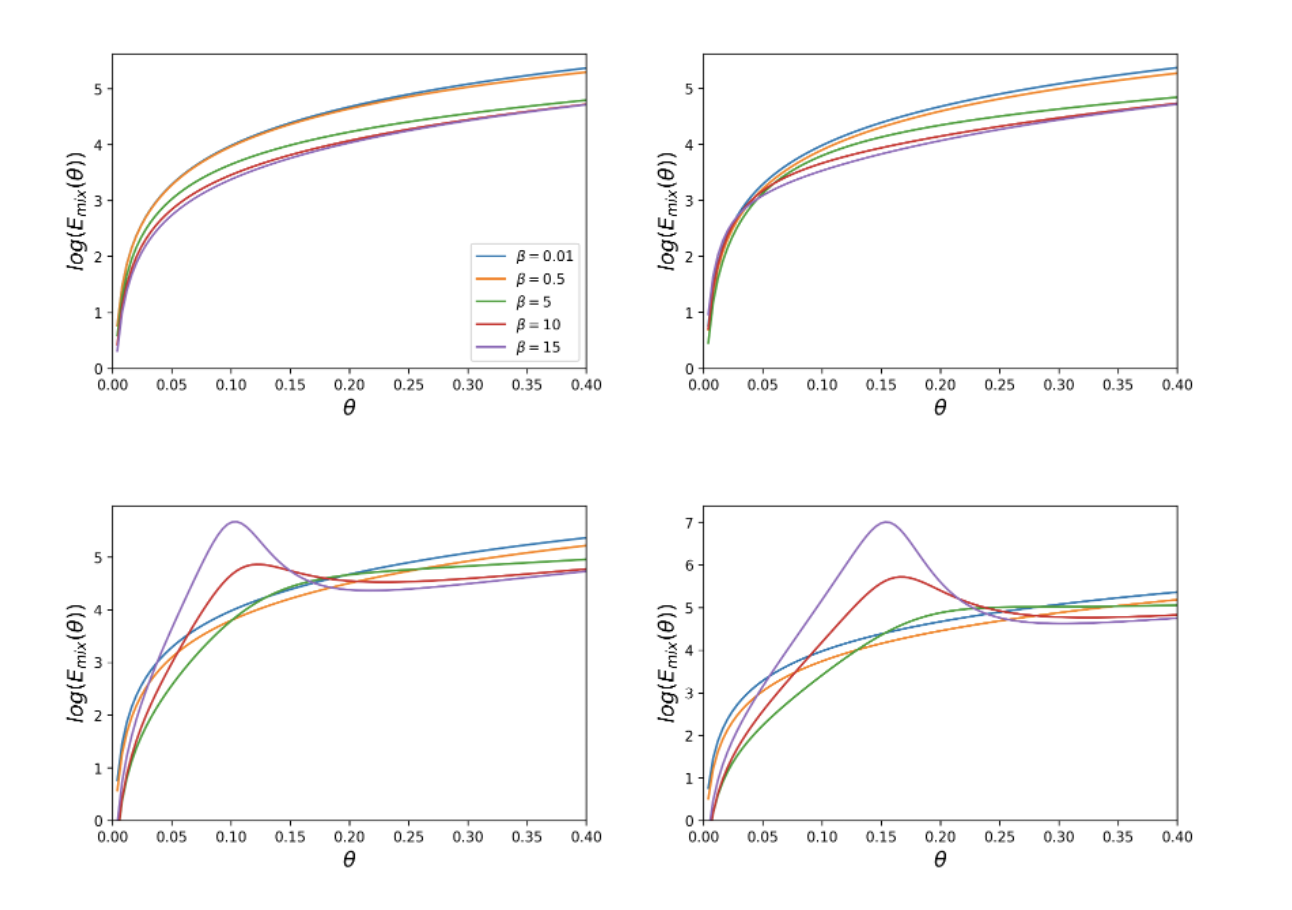}
\caption{Behaviour of the hybrid cost function $E_{mix}(\theta)$ in the Collective Risk Game for varying intensities of selection $\beta$ and risk probabilities $r = 0.01$, $0.3$, $0.7$, and $0.9$ (in this order). Parameters are fixed at $B = 1$, $c = 0.1$, $n = 5$, $m = 2$, and $N = 20$.} 
\label{fig: comparison CRG hybrid}
\end{figure}

The Collective Risk Game also exhibits a phase transition for larger values of the intensity of selection $\beta$ as seen in Figure \ref{fig: CRG comparison}. Additionally, the probability of losing the group endowment $r$ influences this dynamic: the higher 
the risk is, the more pronounced the phase transition becomes at larger $\beta$ as illustrated in Figures \ref{fig: comparison CRG reward}, \ref{fig: comparison CRG punishment}, \ref{fig: comparison CRG hybrid}. This suggests that both the intensity of selection $\beta$ and the probability of loss 
$r$ play a crucial role in shaping the behaviour of all incentive types considered in this work - reward, punishment, and hybrid.

The phase transitions observed in the cost functions arise from the interplay between the intensity of selection $\beta$, the incentive mechanisms, and the strategic decision-making under evolutionary dynamics. When $\beta$ is small, selection is weak, and strategies evolve more randomly, resulting in relatively stable costs. As $\beta$ increases, selection amplifies fitness differences, initially raising the cost of incentives as they become more necessary to support cooperation. At intermediate 
$\beta$, cooperation may stabilise, reducing the need for incentives and lowering costs. However, for very large $\beta$, selection becomes so strong that either cooperators dominate or defectors persist despite incentives, driving costs up again. In the Collective Risk Game, the probability of losing the group endowment $r$ further influences this behaviour. When $r$ is low, defectors may persist, but as $r$ increases, the threat of collective loss leads to a strengthening of cooperative behaviours, making the phase transition more pronounced at high $\beta$. 

The sensitivity of the institutional incentives to the underlying game also explains the differences observed across settings. In the Prisoner’s Dilemma, the strong temptation to defect results in non-monotonic cost behaviour, while in the Stag Hunt, coordination dynamics create a bottleneck effect, where incentives initially drive cooperation but become less necessary once a cooperative majority is reached. In contrast, the Hawk and Dove game, where mixed strategies dominate, leads to a consistently increasing reward cost function as incentives drive up competitive interactions. When punishment is introduced either alone or in combination with reward as is the case in hybrid incentives, we observe a phase transition similar to the one in the Prisoner's Dilemma, although, in the long-run, it seems that the function then tends to increase steadily. 

\subsection{Solution to the optimisation problem}

In this subsection, we present a solution to the minimisation problem displayed in \eqref{eq: min prob}. Our goal is to determine the optimal individual incentive cost $\theta^*$ that minimises the expected institutional cost $E(\theta)$ while achieving a target cooperation level $\omega$ within a population of size $N$.

Previously, we presented the behaviour of the cost function $E(\theta)$ in the case of reward, punishment, and hybrid incentives for general $2\times2$ games (in particular, the cooperative and defective Prisoner's Dilemma, Hawk and Dove, and Stag Hunt) as well as for the Collective Risk Game. The cost function $E(\theta)$ exhibits two behaviours - it is either monotonically increasing or non-monotonic, displaying a characteristic phase transition dependent on the intensity of selection.

When $E(\theta)$ is monotonically increasing, the optimal solution can be derived in closed-form from the fixation probabilities of the underlying Markov process. Specifically, the institution can minimise the expected cost by providing an incentive amount

$$
\theta_0(\omega) = \frac{1}{(N-1)\beta} \log\left(\frac{\omega}{1-\omega}\right) - \Delta,
$$ from Equation \eqref{eq:omega_fraction}.

However, if the cost function $E(\theta)$ exhibits a non-monotonic behaviour, the above solution no longer guarantees optimality. Due to the complexity of the cost function $E(\theta)$ which depends on inverting the fundamental matrix of the Markov chain, a rigorous mathematical solution is intractable. We therefore implement a robust numerical optimisation method in Python 3.10. The approach involves discretising the feasibility range of $\theta$, computing $E(\theta)$ via the fundamental matrix $W^{-1}$ of the Markov chain, and applying scalar minimisation methods such as the \textit{$minimize\_scalar$} function from SciPy. 

\begin{algorithm}
\begin{algorithmic}
\State \textbf{Input:} Population size $N$; game-specific parameters $R, S, T, P$ for general $2\times2$ games and $B, c, r, n, m$ for the Collective Risk Game; intensity of selection $\beta$; target cooperation level $\omega$
\State \textbf{Define:} Range of $\theta$ values, e.g. $\theta \in [0, 4]$
\State \textbf{Define:} Cost function $E(\theta)$ using the fundamental matrix of the Markov chain $W^{-1}$ and the definition in Equation~\eqref{eq: cost function with fC and fD}
\vspace{0.5em}
\Function{CostFunction}{$R, S, T, P, B, c, r, n, m, \beta, \theta, \omega$}
    \State Construct transition matrix $W$ using the Fermi strategy update rule
    \State Compute payoff difference $\Delta$
    \State Invert $W$ to obtain fundamental matrix $W^{-1}$
    \State Compute fixation probabilities $f_C$ and $f_D$
    \State Compute $E(\theta)$ with the information above
    \State \Return $E(\theta)$
\EndFunction
\vspace{0.5em}
\State Use numerical optimisation (\textit{minimize\_scalar} in SciPy) to compute:
\[
\theta^* = \arg\min_{\theta \in [0, 4]} E(\theta).
\]
\State \textbf{Output:} $\theta^*$ and $E(\theta^*)$ 
\end{algorithmic}
\end{algorithm}

\section{Discussion}
\label{sec: discussion}

This work addressed the problem of optimising the cost of institutional incentive schemes (reward, punishment, and hybrid) for promoting cooperation in finite, well-mixed populations in which individuals interact via a general $2\times2$ game or the Collective Risk Game, games in which the average payoff difference between cooperators and defectors depends on the population composition. We derived the cost functions of the institutional spending for providing reward and/or punishment, which we then analysed. Our results confirm and extend previous findings on the impact of the intensity of selection $\beta$ and the structural properties of the payoff matrices in determining the cost dynamics of institutional incentives.

One of the key findings in our study is the existence of a phase transition in the cost functions across various settings. In particular, we showed that, consistent with earlier work on the Donation Game ($R+P = T+S$) \cite{duong2021cost, DuongDurbacHan2022, DuongDurbacHan2024}, the cost functions for the Prisoner's Dilemma ($R+P \neq T+S$) increase initially with the intensity of selection, before exhibiting a non-monotonic pattern - a sharp decline, followed by a second rise. This behaviour exists in both the cooperative and defective versions of the game, suggesting that the observed phase transition is not an outlier, but a general feature of cooperation dilemmas under stochastic dynamics. Expanding our analysis, we then examined the effect of institutional incentives on other well-known games, such as Hawk and Dove and Stag Hunt, under varying intensities of selection. In the Hawk and Dove game, we observed a steady increase in the reward cost function and a phase transition in the cases of punishment and hybrid incentives. The Stag Hunt game introduced a bottleneck behaviour, where costs initially rose with the intensity of selection before sharply declining and converging to a stable value. This indicates that incentives in conflict dilemmas behave differently from those in cooperative-defective settings.

In the Collective Risk Game, our analysis revealed that both the intensity of selection $\beta$ and the probability of losing the endowment $r$ influence the phase transition of the cost functions. Higher values of $r$ amplify this transition effect, highlighting the idea that risk perception plays a vital role in shaping the effectiveness of incentives. These findings are in line with climate-related applications of the game, where the threat of loss has been shown to trigger a cooperative response \cite{gois2019reward, han2017evolution}.

These results suggest that policymakers who are designing institutional incentives should take into account the game parameters, the behavioural context together with the perceived risk of loss as well as the intensity of selection (an indicator of how strongly individuals tend to imitate each other). When the intensity is low (tending towards neutral drift), interventions are less likely to be effective, since there is no tendency to mimic strategies. In this case, institutional incentives might have little impact. By contrast, as the intensity of selection increases, institutional incentives are more likely to shape collective behaviour as the individuals tend to imitate each other more. A limitation of our work is the assumption of a well-mixed population, which might not fully capture the complexity of real-world dynamics. Moreover, we consider a full-incentive scheme (all individuals receive incentives), which may not be feasible in practice. We also assume that the external decision-maker has full knowledge of the population composition which might be costly in real life. 

Overall, our results provide a comprehensive understanding of the behaviour of the cost functions for reward, punishment, and hybrid incentive schemes in a finite, well-mixed population. By extending previous works, particularly by considering games where the payoff difference between cooperators and defectors depends on the state of the Markov chain, we contribute to the understanding of the conditions under which incentives are most effective. Future research could explore structured populations, other incentive mechanisms, or the combination of multiple mechanisms to study the dynamics of cooperation. 

 \renewcommand{\thefigure}{A\arabic{figure}}
 \renewcommand{\thetable}{A\arabic{table}}
 \setcounter{figure}{0}   
 
\section{Appendix}
\label{sec: appendix}

\subsection*{Transition probabilities for punishment incentives and hybrid incentives}
\label{sec: fundamental matrices punishment/hybrid}

In this subsection, we present the computation of the transition probabilities for punishment and hybrid incentives, which differ slightly from those computed for reward in Section \ref{sec: asymptotic limits}. Once the transition probabilities are obtained, the fundamental matrix of the Markov chain for these types of incentive is identical to the one for reward. 

\noindent For punishment, we have:
\begin{equation} 
\label{eq: transition probabilities punishment General 2x2 Games}
\begin{split} 
u_{i,i\pm k} &= 0 \qquad \text{ for all } k \geq 2, \\
u_{i,i\pm1} &= \frac{N-i}{N} \frac{i}{N} \left(1 + e^{\mp\beta[\delta_i +\theta_i/(N-i)]}\right)^{-1},\\
u_{i,i} &= 1 - u_{i,i+1} -u_{i,i-1}.
\end{split} 
\end{equation}
We normalise $b=1$ for simplicity and obtain (recalling that $\Pi_C(i)-\Pi_D (i)=\delta_i$ and that $\theta_i/(N-i)=\frac{\theta}{b}=\theta$ for $1\leq i\leq N-1$):
\begin{equation} 
\label{eq: transition probabilities punishment General 2x2 Games normalised}
\begin{split} 
u_{i,i\pm k} &= 0 \qquad \text{ for all } k \geq 2, \\
u_{i,i\pm1} &= \frac{N-i}{N} \frac{i}{N} \left(1 + e^{\mp\beta[\delta_i +\theta]}\right)^{-1},\\
u_{i,i} &= 1 - u_{i,i+1} -u_{i,i-1}.
\end{split} 
\end{equation}

We then proceed exactly the same as in the case of reward.\\

\noindent For hybrid incentives, we have:
\begin{equation} 
\label{eq: transition probabilities hybrid General 2x2 Games}
\begin{split} 
u_{i,i\pm k} &= 0 \qquad \text{ for all } k \geq 2, \\
u_{i,i\pm1} &= \frac{N-i}{N} \frac{i}{N} \left(1 + e^{\mp\beta[\delta_i +\theta_i/\min(i/a, (N-i)/b)]}\right)^{-1},\\
u_{i,i} &= 1 - u_{i,i+1} -u_{i,i-1}.
\end{split} 
\end{equation}
We normalise $a=b=1$ for simplicity and obtain (recalling that $\Pi_C(i)-\Pi_D (i)=\delta_i$ and that $\theta_i/\min(i/a, (N-i)/b)=\theta$ for $1\leq i\leq N-1$):
\begin{equation} 
\label{eq: transition probabilities hybrid General 2x2 Games normalised}
\begin{split} 
u_{i,i\pm k} &= 0 \qquad \text{ for all } k \geq 2, \\
u_{i,i\pm1} &= \frac{N-i}{N} \frac{i}{N} \left(1 + e^{\mp\beta[\delta_i +\theta]}\right)^{-1},\\
u_{i,i} &= 1 - u_{i,i+1} -u_{i,i-1}.
\end{split} 
\end{equation}

We then proceed exactly the same as in the case of reward.

\subsection*{Neutral drift limit for punishment and hybrid incentives}
\label{sec: appendix neutral drift}

\begin{lemma}(Neutral drift limit)
For both a general $2\times 2$ game and the Collective Risk Game, we have
$$\lim_{\beta\to 0}E(\theta)= \begin{cases}
    N^2\theta H_N \hspace{0.3cm} \textit{(Punishment)} \\
    N^2\theta H_{N,a,b} \hspace{0.3cm} \textit{(Hybrid)},
\end{cases}$$
where $H_N$ is the harmonic number
\[
H_N=\sum_{i=1}^{N-1}\frac{1}{j}
\]
and $H_{N,a,b}$ is defined by
\[
H_{N,a,b}=\sum_{k=1}^{N-1} \frac{1}{j(N-j)}\min\Big(\frac{j}{a},\frac{N-j}{b}\Big).
\]
\end{lemma}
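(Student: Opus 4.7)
The plan is to reuse the machinery already developed for the reward case, since in the neutral drift limit ($\beta\to 0$) the transition matrix collapses to the same matrix $\bar{W}$ regardless of the incentive type. Indeed, inspecting the normalized transition probabilities in Equations \eqref{eq: transition probabilities punishment General 2x2 Games normalised} and \eqref{eq: transition probabilities hybrid General 2x2 Games normalised}, one sees that $u_{i,i\pm 1}$ has exactly the same functional form as in the reward case: the only effect of the incentive type is how it enters $\theta_j$ in Equation \eqref{eq: cost function with fC and fD}. Thus I would start by noting that $\bar{W}^{-1}$ has the same closed form $(\bar{W}^{-1})_{ij}=2[\min(i,j)-ij/N]$, and in particular
\[
\bar{W}^{-1}_{1,j}=\frac{2(N-j)}{N},\qquad \bar{W}^{-1}_{N-1,j}=\frac{2j}{N},\qquad \bar{W}^{-1}_{1,j}+\bar{W}^{-1}_{N-1,j}=2.
\]
I would also recall that $\lim_{\beta\to 0}f_C=\lim_{\beta\to 0}f_D=\tfrac{1}{2}$, as already used in the reward case.

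Next, I would substitute into the master formula \eqref{eq: neutral drift limit}, which gives the uniform expression
\[
\lim_{\beta\to 0}E(\theta)=\frac{N^2}{2}\sum_{j=1}^{N-1}\frac{\bar{W}^{-1}_{1,j}+\bar{W}^{-1}_{N-1,j}}{j(N-j)}\theta_j =N^2\sum_{j=1}^{N-1}\frac{\theta_j}{j(N-j)},
\]
using the fact that the bracketed sum equals $2$. From here the two cases split only through the definition of $\theta_j$ in \eqref{eq: incentives per generation}.

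For the punishment case, I would set $b=1$ so that $\theta_j=(N-j)\theta$. The factor $N-j$ cancels with one in the denominator, yielding
\[
\lim_{\beta\to 0}E_p(\theta)=N^2\theta\sum_{j=1}^{N-1}\frac{1}{j}=N^2\theta H_N,
\]
as claimed. For the hybrid case, I would set $a=b=1$ (the argument is the same for general $a,b$) so that $\theta_j=\min(j/a,(N-j)/b)\theta$, and substitute directly to obtain
\[
\lim_{\beta\to 0}E_{mix}(\theta)=N^2\theta\sum_{j=1}^{N-1}\frac{\min(j/a,(N-j)/b)}{j(N-j)}=N^2\theta H_{N,a,b}.
\]

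There is no substantial obstacle here: the proof is essentially mechanical once one observes that the incentive type is invisible to the dynamics at $\beta=0$ and that the two relevant rows of $\bar{W}^{-1}$ sum pointwise to the constant $2$. The only item worth flagging is to be explicit that the computation of $\bar{W}^{-1}$ performed for the reward case transfers verbatim, since the matrix $\bar{W}$ itself is identical across all three incentive schemes in the neutral limit; after that, the simplification is a one-line algebraic substitution in each case.
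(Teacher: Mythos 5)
Your proposal is correct and follows essentially the same route as the paper: both reduce $\bar W$ to (half) the Cartan matrix, use its known inverse together with $f_C,f_D\to\tfrac12$, and then substitute the incentive-specific $\theta_j$. Your observation that $\bar W^{-1}_{1,j}+\bar W^{-1}_{N-1,j}=2$ identically is a slightly cleaner packaging of the same computation the paper carries out term by term.
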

\begin{proof}
\noindent It follows from the formula of the entries of the fundamental matrix $W$ given in \eqref{eq: matrix W General 2x2 Games} that, in the limit of neutral drift, i.e. when $\beta\rightarrow0$, we have $\lim_{\beta\rightarrow 0} W=\bar W$, where

\begin{align}
\bar W= \frac{1}{2}\begin{pmatrix} 
2&-1&&&&&\\
-1&2&-1&&&&&\\
&\ddots&\ddots&\ddots&&&\\
&&-1&2&-1&&&\\
&&&-1&2&-1&&&\\
&&&&\ddots&\ddots&\ddots&\\
&&&&&-1&2&-1\\
&&&&&&-1&2
\end{pmatrix}.
\end{align}
The tridiagonal matrix above is the well-known Cartan matrix. It follows from \cite{wei2017inverses} that $(\bar{W}^{-1})_{i,j}= 2\Big[\min(i,j)-\frac{ij}{N}\Big]$, for $1\leq i,j\leq N-1.$

Then, noting that  $\lim\limits_{\beta\to 0} f_C = \lim\limits_{\beta\to 0} f_D = \frac{1}{2}$, we have
\begin{equation}
\label{eq: neutral drift limit}
    \lim_{\beta\to 0}E(\theta)= \lim_{\beta\to 0}\sum_{j=1}^{N-1} (f_Dn_{1,j} + f_Cn_{N-1,j} )\theta_j = \frac{N^2}{2}\sum_{j=1}^{N-1} \frac{(\bar W^{-1}_{1,j} + \bar W^{-1}_{N-1,j})}{j(N-j)}\theta_j.
\end{equation}

Therefore, assuming without loss of generality that $b = 1$ and substituting $\theta_j=(N-j)\theta$ in Equation \eqref{eq: neutral drift limit}, we get 

$$
 \lim_{\beta\to 0}E_p(\theta) = N^2\theta H_N.
$$

In the case of hybrid incentives, where $\theta_j=\theta\min(\frac{j}{a},\frac{N-j}{b})$, and assuming without loss of generality that $a=b=1$, we have

\begin{align*}
    \lim_{\beta\to 0}E_{mix}(\theta) &= \frac{N^2}{2}\Big[\sum_{j=1}^{N-1} \frac{\bar W^{-1}_{1,j} + \bar W^{-1}_{N-1,j}}{N-j} \theta_j\Big] \\
    &= \frac{N^2}{2}\Big[\sum_{j=1}^{N-1} \frac{2(N-j)}{jN(N-j)} \theta \min(\frac{j}{a},\frac{N-j}{b}) + \sum_{j=1}^{N-1} \frac{2j}{jN(N-j)} \theta \min(\frac{j}{a},\frac{N-j}{b})\Big] \\
    &= N\theta \Big[ \sum_{j=1}^{N-1} \frac{1}{j}\min(\frac{j}{a},\frac{N-j}{b}) + \sum_{j=1}^{N-1} \frac{1}{N-j}\min(\frac{j}{a},\frac{N-j}{b}) \Big] \\
    &= N\theta \Big[ \sum_{k=1}^{N-1} \Big(\frac{1}{j}+\frac{1}{N-j}\Big)\min(\frac{j}{a},\frac{N-j}{b}) \Big] \\
    &= N^2\theta\Big[ \sum_{k=1}^{N-1} (\frac{1}{j(N-j)})\min(\frac{j}{a},\frac{N-j}{b}) \Big] \\
    &= N^2 \theta H_{N,a,b}.
\end{align*}
\end{proof}

\begin{figure}[H]
\centering
\hspace*{-0.5cm}  
\includegraphics[width=\textwidth]{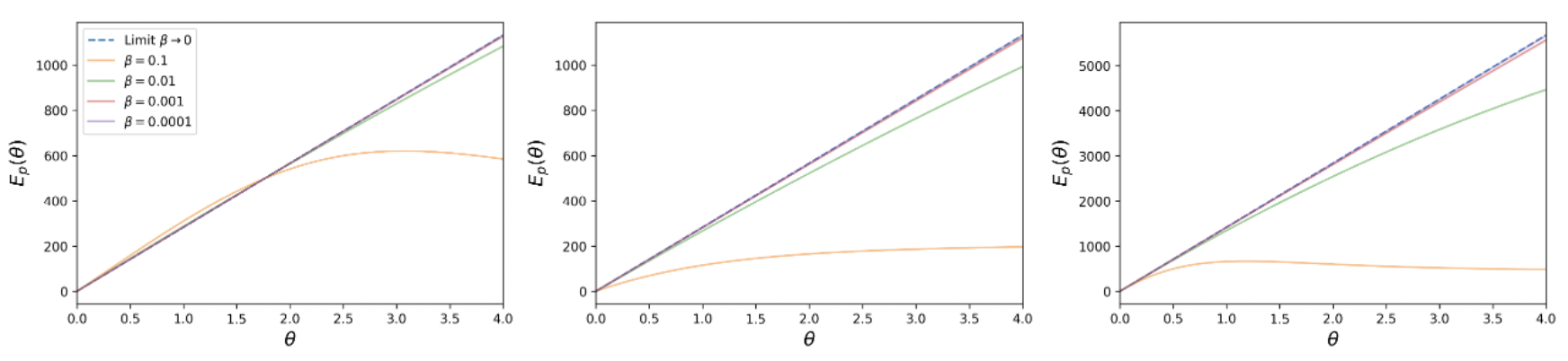}
\caption{The dashed blue line represents the neutral drift limit of the punishment cost function. The three figures correspond to: (i) a defective Prisoner's Dilemma (left; $R = 2$, $S = 0$, $T = 4$, $P = 1$; restricted such that $\delta + \theta < 0$), (ii) a cooperative Prisoner's Dilemma (centre; $R = 1$, $S = 4$, $T = 0$, $P = 2$), and (iii) the Collective Risk Game (right; $B = 1$, $c = 0.1$, $r = 0.9$, $n = 5$, $m = 2$). As the intensity of selection $\beta$ decreases, the punishment cost function converges towards the neutral drift limit.} 
\label{fig: neutral drift limit PD}
\end{figure}

\begin{figure}[H]
\centering
\hspace*{-0.5cm}  
\includegraphics[width=\textwidth]{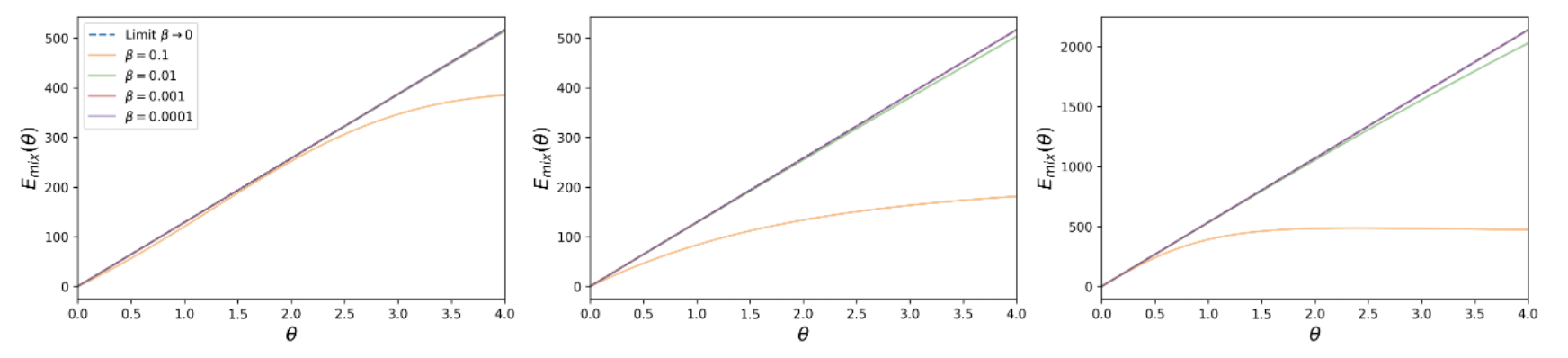}
\caption{The dashed blue line represents the neutral drift limit of the hybrid cost function. The three figures correspond to: (i) a defective Prisoner's Dilemma (left; $R = 2$, $S = 0$, $T = 4$, $P = 1$; restricted such that $\delta + \theta < 0$), (ii) a cooperative Prisoner's Dilemma (centre; $R = 1$, $S = 4$, $T = 0$, $P = 2$), and (iii) the Collective Risk Game (right; $B = 1$, $c = 0.1$, $r = 0.9$, $n = 5$, $m = 2$). As the intensity of selection $\beta$ decreases, the hybrid cost function converges towards the neutral drift limit.} 
\label{fig: neutral drift limit PD}
\end{figure}

\subsection*{Strong selection limit for punishment and hybrid incentives}

\begin{lemma}(Strong selection limit)
\begin{enumerate}
    \item \label{lem: ssl part 1 appendix} Defective Prisoner's Dilemma - Recall that, for the defective Prisoner's Dilemma, $\delta_j < 0$ for all $j$. Assume that $\theta$ is sufficiently small so that $\delta_j + \theta < 0$ for all $j$. Then, in the limit of strong selection, 
\[
\lim_{\beta \to \infty} E(\theta) =
\begin{cases}
\displaystyle N^2 \theta & \text{(Punishment)} \\
\displaystyle \frac{N^2 \theta}{a(N-1)} & \text{(Hybrid)}.
\end{cases}
\]
\item \label{lem: ssl part 2 appendix} Cooperative and defective Prisoner's Dilemma, the Collective Risk Game - Recall that, for the cooperative Prisoner's Dilemma, $\delta_j > 0$ for all $j$. Assume that $\theta$ is sufficiently large so that $\delta_j + \theta > 0$ for all $j$ also holds in the defective Prisoner's Dilemma and in the Collective Risk Game. Then, in the limit of strong selection,
\[
\lim_{\beta \to \infty} E(\theta) =
\begin{cases}
\displaystyle \frac{N^2 \theta}{N-1}  & \text{(Punishment)} \\
\displaystyle \frac{N^2 \theta}{b(N-1)} & \text{(Hybrid)}.
\end{cases}
\]
\end{enumerate}
\end{lemma}

\begin{proof}
The fundamental matrix $W$ transforms as following in the limit of strong selection, i.e. when $\beta\rightarrow\infty$, in part \ref{lem: ssl part 1 appendix} and part \ref{lem: ssl part 2 appendix} of the lemma:
\begin{equation}
\label{eq: matrix W in beta infy limit for delta <= 0 appendix}
\hat W=\begin{pmatrix}
1&0&&&&&\\
-1&1&0&&&&&\\
&\ddots&\ddots&\ddots&&&\\
&&-1&1&0&&&\\
&&&-1&1&0&&&\\
&&&&\ddots&\ddots&\ddots&\\
&&&&&-1&1&0\\
&&&&&&-1&1
\end{pmatrix},
\end{equation}
and
\begin{equation}
\label{eq: matrix W in beta infy limit for delta > 0 appendix}
\overline{W}=\begin{pmatrix}
1&-1&&&&&\\
0&1&-1&&&&&\\
&\ddots&\ddots&\ddots&&&\\
&&0&1&-1&&&\\
&&&0&1&-1&&&\\
&&&&\ddots&\ddots&\ddots&\\
&&&&&0&1&-1\\
&&&&&&0&1
\end{pmatrix}.
\end{equation}

Using Lemma \ref{lem: Toeplitz inverse}, we can compute the matrix inverse for both \eqref{eq: matrix W in beta infy limit for delta <= 0 appendix} and \eqref{eq: matrix W in beta infy limit for delta > 0 appendix} as follows.\\

For \eqref{eq: matrix W in beta infy limit for delta <= 0 appendix}, $\lambda_+=1$ and $\lambda_-=0$, thus
\begin{equation*}
\hat W^{-1}_{i,j} = 
\begin{cases}
     0 \hspace{0.5cm} i<j \quad \\
     1 \hspace{0.5cm} i\geq j 
\end{cases}    
\end{equation*}

Therefore, $\hat W^{-1} =\begin{pmatrix}
1      & 0      & 0      & \cdots & 0 \\
1      & 1      & 0      & \cdots & 0 \\
1      & 1      & 1      & \cdots & 0 \\
\vdots & \vdots & \vdots & \ddots & \vdots \\
1      & 1      & 1      & \cdots & 1 \\
\end{pmatrix}.$\\

For \eqref{eq: matrix W in beta infy limit for delta > 0 appendix}, $\lambda_+=1$ and $\lambda_-=0$, thus
\begin{equation*}
\overline{W}^{-1}_{i,j} = 
\begin{cases}
     1 \hspace{0.5cm} i\leq j \quad \\
     0 \hspace{0.5cm} i> j 
\end{cases}    
\end{equation*}

Therefore, $\overline{W}^{-1} =\begin{pmatrix}
1 & 1 & 1 & \cdots & 1 \\
0 & 1 & 1 & \cdots & 1 \\
0 & 0 & 1 & \cdots & 1 \\
\vdots & \vdots & \vdots & \ddots & \vdots \\
0 & 0 & 0 & \cdots & 1 \\
\end{pmatrix}.$\\

In the limit of strong selection, $f_D = \frac{1}{e^{\beta (N-1)(\Delta + \theta)}+1}$ and $f_C = \frac{e^{\beta (N-1)(\Delta +  \theta)}}{e^{\beta (N-1)(\Delta +  \theta)}+1}$ transform as follows.

Recall $\Delta=
\frac{1}{N-1}\sum_{j=1}^{N-1}\delta_j$. Then
$$
\Delta+\theta= \frac{1}{N-1}\sum_{j=1}^{N-1}\delta_j + \theta = \frac{1}{N-1}\sum_{j=1}^{N-1}\Big(\delta_j+\theta\Big).
$$

In part \ref{lem: ssl part 1 appendix}, as $\delta_j+\theta<0$, we have $\Delta+\theta<0$. Similarly, in part \ref{lem: ssl part 2 appendix}, since $\delta_j+\theta>0$, we get that $\Delta+\theta>0$. Thus,

\[
\lim_{\beta \to \infty} f_C =
\begin{cases}
0 & \text{for } \text{part \ref{lem: ssl part 1 appendix}}\\
1 & \text{for } \text{part \ref{lem: ssl part 2 appendix}}
\end{cases}
\quad
\lim_{\beta \to \infty} f_D =
\begin{cases}
1 & \text{for } \text{part \ref{lem: ssl part 1 appendix}} \\
0 & \text{for } \text{part \ref{lem: ssl part 2 appendix}}.
\end{cases}
\]

Therefore, by rewriting the cost function $E(\theta)$ from Equation \eqref{eq: cost function with fC and fD}, we get:

\[
\lim_{\beta \to \infty} E(\theta) =
\begin{cases}
N^2 \sum_{j=1}^{N-1} \frac{\overline{W}_{N-1,j}}{j(N-j)}\theta_j & \text{for part \ref{lem: ssl part 1 appendix}} \\
N^2 \sum_{j=1}^{N-1} \frac{\hat W_{1,j}}{j(N-j)}\theta_j & \text{for part \ref{lem: ssl part 2 appendix}} .
\end{cases}
\]

Substituting the values of $\hat W_{i,j}$ and $\overline{W} _{i,j}$ computed in \eqref{eq: matrix W in beta infy limit for delta <= 0 appendix} and \eqref{eq: matrix W in beta infy limit for delta > 0 appendix}, respectively, as well as the definition of $\theta_j$ from Equation \eqref{eq: incentives per generation}, we get 

\[
\lim_{\beta \to \infty} E(\theta) =
\begin{cases}
\displaystyle
\begin{aligned}
&\text{part \ref{lem: ssl part 1 appendix}:}
\begin{cases}
N^2 \theta & \text{(Punishment)} \\
\displaystyle \frac{N^2 \theta}{a(N-1)} & \text{(Hybrid)}
\end{cases}\\[3ex]
&\text{part \ref{lem: ssl part 2 appendix}:}
\begin{cases}
\displaystyle \frac{N^2 \theta}{N-1} & \text{(Punishment)} \\
\displaystyle \frac{N^2 \theta}{b(N-1)} & \text{(Hybrid)}.
\end{cases}
\end{aligned}
\end{cases}
\]
\end{proof}

\begin{figure}[H]
\centering
\hspace*{-0.5cm}  
\includegraphics[width=\textwidth]{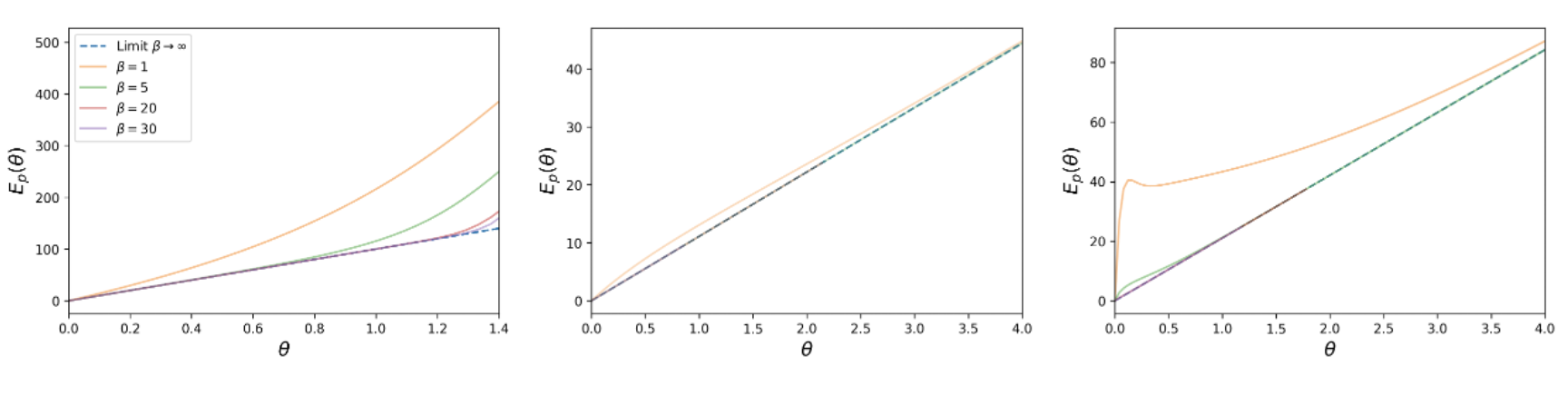}
\caption{The dashed blue line represents the strong selection limit of the punishment cost function. The three figures correspond to: (i) a defective Prisoner's Dilemma (left; $R = 2$, $S = 0$, $T = 4$, $P = 1$; restricted such that $\delta + \theta < 0$), (ii) a cooperative Prisoner's Dilemma (centre; $R = 1$, $S = 4$, $T = 0$, $P = 2$), and (iii) the Collective Risk Game (right; $B = 1$, $c = 0.1$, $r = 0.9$, $n = 5$, $m = 2$; restricted such that $\delta + \theta > 0$). As the intensity of selection $\beta$ increases, the punishment cost function asymptotically approaches the strong selection limit.} 
\label{fig: strong selection limit PD}
\end{figure}

\begin{figure}[H]
\centering
\hspace*{-0.5cm}  
\includegraphics[width=\textwidth]{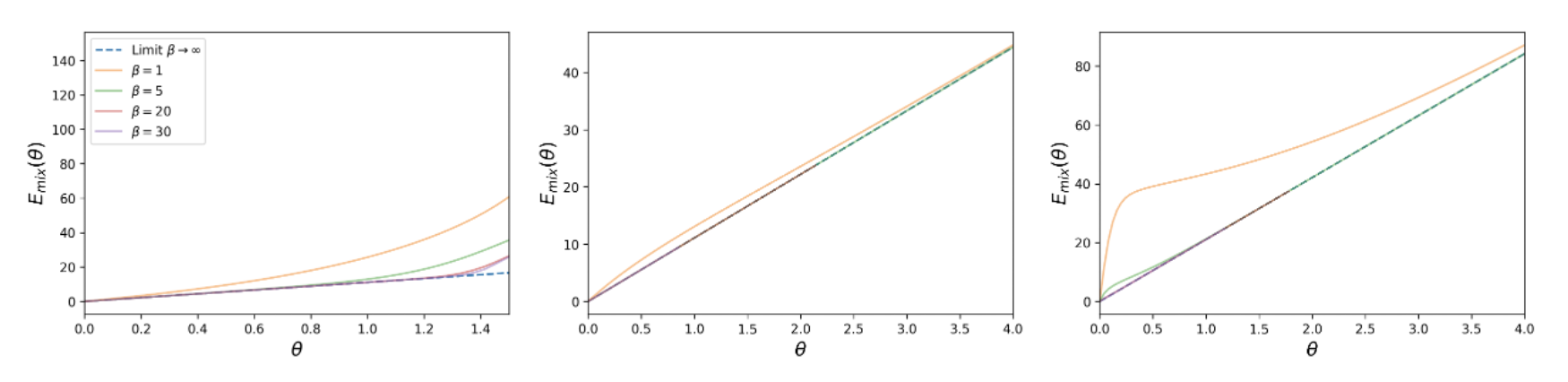}
\caption{The dashed blue line represents the strong selection limit of the hybrid cost function. The three figures correspond to: (i) a defective Prisoner's Dilemma (left; $R = 2$, $S = 0$, $T = 4$, $P = 1$; restricted such that $\delta + \theta < 0$), shown for the regime $\delta_j + \theta < 0$ for tractability; (ii) a cooperative Prisoner's Dilemma (centre; $R = 1$, $S = 4$, $T = 0$, $P = 2$); and (iii) the Collective Risk Game (right; $B = 1$, $c = 0.1$, $r = 0.9$, $n = 5$, $m = 2$; restricted such that $\delta + \theta > 0$). As the intensity of selection $\beta$ increases, the hybrid cost function converges towards the strong selection limit.} 
\label{fig: strong selection limit PD}
\end{figure}

\section*{Acknowledgement} The research of M.H.D. was supported by EPSRC Grants EP/W008041/1 and EP/V038516/1. C.M.D is supported by an EPSRC Studentship. T.A.H. is supported by EPSRC (grant EP/Y00857X/1). 
\section*{Data Availability Statement}
Data sharing is not applicable to this article as no datasets were generated or analysed during the current study.
\bibliographystyle{plain}
\bibliography{references}
\end{document}